\title{Unconditional Time and Space Complexity Lower Bounds for Intersection Non-Emptiness\footnote{Draft as of January 29th, 2026.}} %
\author{Michael Wehar}{Department of Computer Science, Bryn Mawr College, Bryn Mawr, PA}{mwehar@brynmawr.edu}{https://orcid.org/0009-0007-9251-7128}{}
\authorrunning{M. Wehar} %
\keywords{Decision Problems for Finite Automata, Unconditional Lower Bounds, Time Complexity, Space Complexity} %
\newcommand{\DTIME}{\texttt{DTIME}}
\newcommand{\ATIME}{\texttt{ATIME}}
\newcommand{\SPATIME}{\texttt{ATIME}^{O(1)}_{\texttt{SP}}}
\newcommand{\BNSPACE}{\texttt{NSPACE$_b$}}
\newcommand{\DSPACE}{\texttt{DSPACE}}
\newcommand{\NTISP}{\texttt{NTISP}}
\newcommand{\GC}{\texttt{GUESSCHECK}}
\newcommand{\DTIRE}{\texttt{DTIRE}}
\newcommand{\NSPACE}{\texttt{NSPACE}}
\newcommand{\PSPACE}{\texttt{PSPACE}}
\newcommand{\EXPTIME}{\texttt{EXPTIME}}
\newcommand{\NL}{\texttt{NL}}
\newcommand{\PTIME}{\texttt{PTIME}}
\newcommand{\dfaint}{\texttt{DFA{-}INT}}
\newcommand{\twodfa}{\texttt{2DFA{-}NE}}
\newtheorem*{hypothesis}{Hardness Hypothesis}
\newtheorem*{bconjecture}{Binary Space Conjecture}
\begin{document}

\maketitle

\begin{abstract}
We reinvestigate known lower bounds for the Intersection Non-Emptiness Problem for Deterministic Finite Automata (DFA's).
We first strengthen conditional time complexity lower bounds from T. Kasai and S. Iwata (1985) which showed that Intersection Non-Emptiness is not solvable more efficiently
unless there exist more efficient algorithms for non-deterministic logarithmic space ($\NL$).
Next, we apply a recent breakthrough from R. Williams (2025) on the space efficient simulation of deterministic time to show an unconditional $\Omega(\frac{n^2}{\log^3(n) \log\log^2(n)})$
time complexity lower bound for Intersection Non-Emptiness.
Finally, we consider implications that would follow if Intersection Non-Emptiness for a fixed number of DFA's is computationally hard for a fixed polynomial time complexity class.
These implications include $\PTIME \subseteq \DSPACE(n^c)$ for some $c \in \mathbb{N}$ and $\PSPACE = \EXPTIME$.
\end{abstract}

\section{Background}

\subsection{Intersection Non-Emptiness}

The Intersection Non-Emptiness Problem for Deterministic Finite Automata (DFA's) is defined as follows.  Given a finite list of DFA's $D_1$, $D_2$, ..., $D_k$ over an alphabet $\Sigma$, does there exist a string $s \in \Sigma^{*}$ such that $s \in \bigcap_{i \in [k]}{L(D_i)}$?  In other words, is the intersection of the regular languages (associated with the DFA's) non-empty?  We denote this problem by \dfaint.  It became natural to consider \dfaint\ following the work of M. Rabin and D. Scott (1959) \cite{rabin1959} because it was shown that a product automaton could be constructed that recognizes the intersection of any finite set of regular languages.  Using this product construction, one can solve $\dfaint$ in $O(n^{O(k)})$ time where $n$ denotes the total input size and $k$ denotes the number of DFA's.  We will measure the total input size $n$ in terms of the length of the bit string that encodes the entire input.  The standard approach is to imagine a product directed graph and simply search to see if any product final state can be reached from the product initial state.  Furthermore, we can solve $\dfaint$ in $O(k \log(n))$ non-deterministic space because a product state along with a counter can be represented using $O(k \log(n))$ bits of memory.  Then, we simply non-deterministically guess a path from the product initial state to a product final state and reject if the counter surpasses the value $n^k$.
We also consider the $k$-$\dfaint$ problem where $k$, the number of DFA's, is fixed.

The complexity of $\dfaint$ and $k$-$\dfaint$ have been investigated in prior works.  It was shown that $\dfaint$ is a $\PSPACE$-complete problem in D. Kozen (1977) \cite{kozen1977}.  Next, T. Kasai and S. Iwata (1985) \cite{kasai1985} showed that $k$-$\dfaint$ is not solvable more efficiently unless all non-deterministic logspace ($\NL$) problems are solvable in less time.  In particular, it was shown that if $k$-$\dfaint$ is solvable in $O(n^{\frac{k - 2 - \varepsilon}{2}})$ time for some $\varepsilon > 0$, then $\BNSPACE(k \log(n)) \subseteq \DTIME(n^{k - \varepsilon})$ for some $\varepsilon > 0$. Here, $\BNSPACE$ is similar to the common $\NSPACE$, but it essentially measures bits of memory instead of measuring length over larger than binary tape alphabets.  In particular, it measures space complexity relative to two-tape Turing machines where one tape is read-only and the other tape is restricted to a binary alphabet except for a fixed number of special delimiter symbols $\#$.  This is also known as an offline Turing machine.  Later on, the author (2014) \cite{wehar2014} showed that $\dfaint \notin \NSPACE(f(n))$ unconditionally for all $f(n)$ such that $f(n)$ is $o(\frac{n}{\log(n) \log\log(n)})$.  This was proven by demonstrating that a space efficient algorithm for $\dfaint$ could be used to space efficiently simulate all space bounded computations in a way that violates the Non-Deterministic Space Hierarchy Theorem \cite{conl1,conl2,nspace}.
The reason for the $\log \log(n)$ factor in the denominator is because the proof constructs $O(n)$ DFA's with at most $O(\log(n))$ states each.  Furthermore, it takes $\log(n) \log\log(n)$ bits to represent a $\log(n)$ state DFA as a binary string because we are essentially encoding a directed graph with $\log(n)$ vertices and $O(\log(n))$ edges (assuming a fixed finite alphabet).  Additional conditional time complexity lower bounds for $\dfaint$ have been shown in \cite{karakostas2003,wehar2016,fernau2017,oliveira2020}.

\subsection{Structural Complexity}

Structural complexity is concerned with complexity classes and their relationships.  We recall the definitions for a selection of standard complexity classes that we consider within this work.  The standard classes that we consider within this work are polynomial time denoted by $\PTIME = \bigcup_{k\in\mathbb{N}} \DTIME(n^k)$, polynomial space denoted by $\PSPACE = \bigcup_{k\in\mathbb{N}} \DSPACE(n^k)$, exponential time denoted by $\EXPTIME = \bigcup_{k\in\mathbb{N}}\DTIME(2^{n^k})$, and non-deterministic logarithmic space denoted by $\NL = \NSPACE(\log(n))$. %

Within structural complexity theory, it is standard to consider whether one class is included within another.  We recall a selection of known relationships between complexity classes that we consider within this work.
First, the deterministic Time Hierarchy Theorem states that $\DTIME(t_1(n)) \subsetneq \DTIME(t_2(n))$ when $t_2(n)$ is $\omega(t_1(n) \log(t_1(n)))$ and both $t_1(n)$ and $t_2(n)$ are time constructible \cite{time1,time2}.
Next, the Non-Deterministic Space Hierarchy Theorem states that $\NSPACE(s_1(n)) \subsetneq \NSPACE(s_2(n))$ when $s_1(n)$ is $o(s_2(n))$ and $s_2(n)$ is space constructible \cite{trakhtenbrot1964,borodin1972,conl1,conl2,nspace}.
Then, alternating time is included within deterministic space.  In particular, we have $\ATIME(t(n)) \subseteq \DSPACE(t(n))$ \cite{alt1}.  Finally, when space is sublinear, polynomial time is within alternating linear time.
That is, we have $\NTISP(n^k, n^{1 - \varepsilon}) \subseteq \ATIME(n)$ for all $k \in \mathbb{N}$ and $\varepsilon > 0$ \cite{atime1,atime2,atime3,atime4}.

\subsection{Space Efficient Simulation}\label{sec:spaceefficient}

Time bounded computations can be space efficiently simulated.  In particular, it was shown in \cite{hopcroft1} that $\DTIME(t(n)) \subseteq \DSPACE(\frac{t(n)}{\log(t(n))})$.  Then, a more space efficient simulation was demonstrated in \cite{hopcroft2} when the time bounded machine has a read tape and only one read / write tape (also known as an offline Turing machine).  We use $\DTIME_1$ to measure time relative to these restricted machines with only one tape that can both read and write.
In particular, the improved simulation from \cite{hopcroft2} demonstrated that $\DTIME_1(t(n)) \subseteq \DSPACE(\sqrt{t(n)\log(t(n))})$.  A more recent breakthrough was made demonstrating that for the more standard multitape Turing machine model, we also have $\DTIME(t(n)) \subseteq \DSPACE(\sqrt{t(n)\log(t(n))})$ \cite{williams2025}.  This result is based on the space efficient algorithm for Tree Evaluation from \cite{treeeval}.

\section{Introduction}

\subsection{Motivation}
There are two primary motivations for this work.  Firstly, we present modest improvements to the known lower bounds for the Intersection Non-Emptiness for DFA's Problem ($\dfaint$) which is a core problem within the Reachability and Automata Theory research communities.  Secondly, few natural problems in computer science theory have known unconditional time complexity lower bounds.  We demonstrate that Intersection Non-Emptiness for DFA's is such an example.  The previously known examples tend to be $\PTIME$-hard and $\EXPTIME$-hard problems which are therefore at least as hard as simulating time bounded Turing machines \cite{lowerbounds1,lowerbounds2}.
In particular, combinatorial problems related to pebbling games were shown to have polynomial time lower bounds in \cite{kasai1984}.  Also, for some $c > 0$, intersection non-emptiness for $k$ DFA's and one PDA was shown to have an $\Omega(n^{c k})$ time lower bound for all fixed $k \in \mathbb{N}$ \cite{wehar2015}.

We will see that the classic result on the space efficient simulation of time bounded computations (from \cite{hopcroft1}) implies that $\PSPACE$-hard problems may also have
time complexity lower bounds, especially with the recent improvement of this simulation from \cite{williams2025}.
This is how we will demonstrate the time complexity lower bounds for $\dfaint$ in Section \ref{sec:timelower}.
We hope that this work will further lead to unconditional time complexity lower bounds for other $\PSPACE$-hard problems as well.

\subsection{Our Contribution}
First, in Theorem \ref{conditionallower1}, we improve the conditional lower bounds of \cite{kasai1985} showing that $\dfaint \notin \DTIME(n^{k-\varepsilon})$ for all $\varepsilon > 0$ unless $\BNSPACE(k \log(n)) \subseteq \DTIME(n^{k - \varepsilon})$ for some $\varepsilon > 0$.  Next, in Theorem \ref{spacelower1},
we restate the unconditional space complexity lower bounds of \cite{wehar2014}, showing that $\dfaint$ is not solvable non-deterministically in $o(\frac{n}{\log(n)\log\log(n)})$ space.
Then, we show in Theorem \ref{timelower2}, using the recent breakthrough from \cite{williams2025}, that $\dfaint$ is not solvable in $o(\frac{n^2}{\log^3(n)\log\log^2(n)})$ deterministic time.
Finally, in Theorem \ref{hardnesshypo2}, we consider the implications of a hardness hypothesis for the $k$-$\dfaint$ problems.
That is, we show that if $\dfaint$ for a fixed number of DFA's is $\DTIME(n^{\beta + \varepsilon})$-hard under $O(n^{\beta})$ time reductions for some
$\beta \geq 1$ and $\varepsilon > 0$, then $\PTIME \subseteq \DSPACE(n^{\beta})$.  Furthermore, we show in Corollary \ref{hardnesshypo3} that this hardness hypothesis also implies that $\PSPACE = \EXPTIME$.

\section{Conditional Time Lower Bound}

Let $k \in \mathbb{N}$ be given.  We denote by $k$-$\dfaint$ the $\dfaint$ problem where the number of DFA's is restricted to be $k$.
In \cite{kasai1985}, the following conjecture was made about the relationship between non-deterministic logarithmic binary space and deterministic polynomial time.

\begin{bconjecture}{(\cite{kasai1985})}
For all $k \in \mathbb{N}$ and $\varepsilon > 0$, $$\BNSPACE(k \log(n)) \nsubseteq \DTIME(n^{k - \varepsilon}).$$
\end{bconjecture}

Assuming that this conjecture is true, it was demonstrated that $k$-$\dfaint$ is not solvable in $O(n^{\frac{k - 2 - \varepsilon}{2}})$ deterministic time for all $k > 8$ and $\varepsilon > 0$ \cite{kasai1985}.
In this section, we first strengthen the preceding result by introducing a tighter reduction in Proposition \ref{prop:conditionallower}.
Then, we introduce an amplification technique in Lemma \ref{lem:amp} to further tighten the result.\footnote{A preliminary form of this result was intended by the author to be included in \cite{oliveira2020}, but it did not make the final version.}

\begin{proposition}\label{prop:conditionallower}
The Binary Space Conjecture implies $(k+1)$-$\dfaint \notin \DTIME(n^{k-\varepsilon})$ for all $k \in \mathbb{N}$ and $\varepsilon > 0$.
\end{proposition}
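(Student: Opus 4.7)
The plan is to prove the contrapositive. Assume for some $k \in \mathbb{N}$ and $\varepsilon > 0$ that $(k+1)$-$\dfaint \in \DTIME(n^{k-\varepsilon})$. I will exhibit a reduction from every language in $\BNSPACE(k \log(n))$ to $(k+1)$-$\dfaint$ that, on inputs of length $n$, produces in $\tilde{O}(n)$ time a $(k+1)$-DFA instance of total encoding size $N = \tilde{O}(n)$. Feeding this instance into the assumed algorithm yields an $O(N^{k-\varepsilon}) = O(n^{k-\varepsilon'})$ time decision procedure for the original language for some $\varepsilon' > 0$, contradicting the Binary Space Conjecture.

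For the reduction, fix a non-deterministic offline machine $M$ using $k \log(n)$ bits of binary work space, and fix an input $x$ of length $n$. A configuration of $M$ on $x$ is completely described by $M$'s internal state, the read-head position on $x$, the head position on the binary work tape, and the $k \log(n)$ bits of work-tape contents. I partition those bits into $k$ blocks $B_1, \ldots, B_k$ of $\log(n)$ bits each, so each block takes a value in $[0, n-1]$. The intended witness string in the intersection encodes an accepting computation $C_0 \# C_1 \# \cdots \# C_T$, where each $C_t$ records the state, both head positions, a ``witness bit'' declaring the symbol currently under the work head, and the current values of $B_1, \ldots, B_k$.

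I then construct $k + 1$ DFAs over a fixed alphabet. The first, a \emph{machine DFA}, remembers the machine state, the two head positions, and the witness bit between successive configurations; it checks that each one-step transition is consistent with $M$'s transition function (including the appropriate lookup into the hard-coded input $x$) and commits to the next bit to be written and the next head movements. The remaining $k$ DFAs are \emph{block DFAs}: the $i$-th block DFA remembers the current value of $B_i$ and verifies that $B_i$ is copied verbatim between consecutive configurations whenever the work head is not inside $B_i$, and that, when the head is inside $B_i$, the single edit lies at the declared offset and equals the declared bit. A state count shows that each DFA is implementable with $O(n \log(n))$ states.

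The main obstacle I anticipate is not the state counting but the encoding design, since the $k+1$ DFAs must agree on the computation without any single one of them holding a full configuration in its state. I therefore need to lay out the fields inside each $C_t$ so that the work-head position and the committed next-bit appear at fixed, redundantly accessible offsets, and so that any cheating string --- one in which the witness bit lies about the tape contents, or a block is silently mutated while the head is elsewhere, or the machine DFA and the block DFAs disagree on the head location --- is rejected by at least one DFA. Once this encoding and the local transition-verification logic are in place, non-emptiness of the intersection corresponds exactly to $M$ accepting $x$. Since each DFA has $O(n \log(n))$ states over a constant alphabet, its binary encoding has length $O(n \log^2(n))$, and the total reduction output has size $N = O(n \log^2(n)) = \tilde{O}(n)$ because $k$ is fixed; absorbing the polylogarithmic factor into the exponent yields the desired contradiction with the Binary Space Conjecture.
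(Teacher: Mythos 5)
Your proposal follows essentially the same route as the paper: take the contrapositive, split the $k\log(n)$-bit work tape into $k$ blocks of $\log(n)$ bits, build one ``machine'' DFA plus $k$ block DFAs each of size $\tilde{O}(n)$, and absorb the polylogarithmic encoding overhead into the exponent. The only divergence is a cosmetic encoding choice --- you place the head positions and all block contents explicitly in the witness string and have the block DFAs cross-check them, whereas the paper's symbols carry only $(q,r_0,r_1,m_0,m_1,w)$ and each block DFA tracks the work-head position and its own block contents internally --- but this does not change the DFA sizes or the conclusion.
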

\begin{proof}
We proceed with a variation on the standard reduction from \cite{kozen1977,karakostas2003,oliveira2020} where it is shown how DFA's can be constructed to verify the correctness of a space bounded computation.  Let $k \in \mathbb{N}$ be given.  Let a language $L \in \BNSPACE(k \log(n))$ and a non-deterministic Turing machine $M$ such that $M$ decides $L$ using at most $k \log(n)$ space be given.  Let an input string $s$ of length $n$ be given.  We construct $k + 1$ DFA's as follows.  The DFA's read in a sequence of $6$-tuples $(q, r_0, r_1, m_0, m_1, w)$ encoded in binary such that $q$ represents the current state. We use $r_0$ and $r_1$ to represent what is currently read on the input and work tapes of $M$ on input $s$, respectively.  Similarly, we use $m_0$ and $m_1$ to represent which direction the input and work tapes move, respectively.  We use $w$ to represent what will be written to the work tape.  One DFA $\mathcal{D}_0$ will keep track of the current state, the input tape head position, and verify the correctness of each $q$, $r_0$, $m_0$, $m_1$, and $w$.  For each $i > 0$, the DFA $\mathcal{D}_i$ will keep track of the worktape position and verify the correctness of each $r_1$ when the worktape position is between $(i-1) \log(n)$ and $i \log(n) - 1$.  In other words, we broke the worktape into $k$ blocks each of length $\log(n)$ where each $\mathcal{D}_i$ stores the contents of the $i$th block.  These blocks store binary strings representing the tape content with a fixed number of delimiter $\#$ symbols.  Together, the DFA's will verify that the sequence corresponds with a valid and accepting computation of $M$ on input $s$.  The DFA $\mathcal{D}_0$ will have $\tilde{O}(\vert M \vert^2 n)$ states while each $\mathcal{D}_i$ will have $\tilde{O}(n)$ states.  Because $\vert M \vert$ is a constant, we have constructed $k+1$ DFA's each with $\tilde{O}(n)$ states.  Now, we proceed with the contrapositive.  If $(k+1)$-$\dfaint \in \DTIME(n^{k-\varepsilon})$ for some $k \in \mathbb{N}$ and $\varepsilon > 0$, then by the reduction, we have $\BNSPACE(k\log(n)) \subseteq \DTIME(\tilde{O}(n^{k - \varepsilon}))$.
Therefore, the Binary Space Hypothesis would not hold.
\end{proof}

The following amplification lemma works because $\dfaint$ has a self-reducibility property where for all $d$ and $k \in \mathbb{N}$, we have $(d\cdot k)$-$\dfaint$ on DFA's of size $n$ is reducible to $k$-$\dfaint$ on DFA's of size $n^d$ by applying the classic Cartesian product construction from \cite{rabin1959}.

\begin{lemma}\label{lem:amp}
Let $c \in \mathbb{N}$ be given.  If $(k+c)$-$\dfaint \notin \DTIME(n^{k - \varepsilon})$ for all $k \in \mathbb{N}$ and $\varepsilon > 0$, then $k$-$\dfaint \notin \DTIME(n^{k-\varepsilon})$ for all $k \in \mathbb{N}$ and $\varepsilon > 0$.
\end{lemma}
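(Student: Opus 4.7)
I would prove the contrapositive using the Cartesian product self-reducibility noted before the lemma. Specifically, assuming $k_0$-$\dfaint \in \DTIME(n^{k_0 - \varepsilon_0})$ for some $k_0 \in \mathbb{N}$ and $\varepsilon_0 > 0$, I would produce $k_1 \in \mathbb{N}$ and $\varepsilon_1 > 0$ witnessing $(k_1 + c)$-$\dfaint \in \DTIME(n^{k_1 - \varepsilon_1})$. The high-level idea is to convert any $(k_1 + c)$-$\dfaint$ instance into a $k_0$-$\dfaint$ instance by collapsing blocks of $d$ DFA's at a time via the product construction, and then run the assumed faster $k_0$-DFA algorithm on the reduced instance.

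\textbf{Parameter choice and reduction.} I would first fix an integer $d$ large enough that $d \varepsilon_0 > c$, with a constant amount of slack; after replacing $\varepsilon_0$ by $\min(\varepsilon_0, 1)$ if needed, such a $d$ exists with $d > c$, so $k_1 := d k_0 - c$ is a positive integer. Given a $(k_1 + c) = (d k_0)$-DFA instance of total encoding size $n$, I would partition the DFA's into $k_0$ consecutive groups of $d$ each and apply the classical Cartesian product within each group. By the inequality of arithmetic and geometric means applied to the encoding sizes within a single group, the product of the state counts in one group is at most $n^d$. Accounting for the polylogarithmic overhead in encoding a DFA with up to $n^d$ states, the resulting $k_0$-$\dfaint$ instance has total encoding size $O(n^{d + \delta})$ for any fixed $\delta > 0$.

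\textbf{Time accounting and expected obstacle.} Running the assumed algorithm on the reduced instance gives total time
\begin{equation*}
O\bigl((n^{d + \delta})^{k_0 - \varepsilon_0}\bigr) = O\bigl(n^{d k_0 - d \varepsilon_0 + \delta (k_0 - \varepsilon_0)}\bigr) = O\bigl(n^{k_1 + c - d \varepsilon_0 + \delta (k_0 - \varepsilon_0)}\bigr).
\end{equation*}
Choosing $\delta$ small enough that $d \varepsilon_0 - c - \delta (k_0 - \varepsilon_0) > 0$ makes the exponent $k_1 - \varepsilon_1$ for some $\varepsilon_1 > 0$, which contradicts the assumed hypothesis that $(k+c)$-$\dfaint \notin \DTIME(n^{k - \varepsilon})$ for every $k \in \mathbb{N}$ and $\varepsilon > 0$. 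The calculation is otherwise mechanical; the one step I expect to require care is keeping the polylogarithmic overhead in the encoding of the product DFA's from consuming the $\varepsilon_0$-savings in the exponent, which is exactly why $d$ must be picked with constant slack beyond $c / \varepsilon_0$ rather than merely satisfying $d \varepsilon_0 > c$. Once $d$ is fixed appropriately, the existence of a valid $\varepsilon_1 > 0$ follows immediately.
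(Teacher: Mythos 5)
Your proof is correct and takes essentially the same approach as the paper: you argue by contrapositive using the Cartesian product self-reducibility, choosing a block size $d$ (the paper calls it $\alpha = \lceil \tfrac{c+1}{\varepsilon}\rceil$) large enough that the $\varepsilon$-savings per factor overcome the additive $c$-DFA penalty. The only cosmetic difference is that you track the polylogarithmic encoding overhead of the product DFA explicitly via the auxiliary parameter $\delta$, whereas the paper absorbs it implicitly into the extra unit of slack built into $\alpha\varepsilon \geq c+1$; both choices land in the same place.
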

\begin{proof}
Let $c \in \mathbb{N}$ be given.  Suppose that $(k+c)$-$\dfaint \notin \DTIME(n^{k - \varepsilon})$ for all $k \in \mathbb{N}$ and $\varepsilon > 0$.  Suppose for sake of contradiction that $k$-$\dfaint \in \DTIME(n^{k-\varepsilon})$ for some $k$ and $\varepsilon > 0$.
Consider $\alpha = \lceil \frac{c + 1}{\varepsilon} \rceil$.
By applying the Cartesian product construction \cite{rabin1959}, we have
$$(\alpha \cdot k)\text{-}\dfaint \in \DTIME(n^{\alpha \cdot (k - \varepsilon)}) \subseteq \DTIME(n^{\alpha \cdot k - c - 1}).$$
Now, by assigning $r = \alpha \cdot k - c$, we get $(r + c)$-$\dfaint \in \DTIME(n^{r - 1})$.
Therefore, we have contradicted the assumption and obtained the desired result.
\end{proof}

\begin{theorem}\label{conditionallower1}
The Binary Space Conjecture implies $k$-$\dfaint \notin \DTIME(n^{k-\varepsilon})$ for all $k \in \mathbb{N}$ and $\varepsilon > 0$.
\end{theorem}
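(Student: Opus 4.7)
The plan is to combine Proposition~\ref{prop:conditionallower} with the amplification Lemma~\ref{lem:amp}, which have been set up precisely so that the target statement falls out as an immediate instance. No further combinatorial or automata-theoretic work should be needed beyond quoting these two results.

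More concretely, I would first invoke Proposition~\ref{prop:conditionallower} to obtain, under the Binary Space Conjecture, the conclusion that $(k+1)$-$\dfaint \notin \DTIME(n^{k-\varepsilon})$ for all $k \in \mathbb{N}$ and all $\varepsilon > 0$. This is exactly the hypothesis of Lemma~\ref{lem:amp} with the constant offset $c = 1$. Applying Lemma~\ref{lem:amp} with $c = 1$ then yields $k$-$\dfaint \notin \DTIME(n^{k-\varepsilon})$ for all $k \in \mathbb{N}$ and $\varepsilon > 0$, which is precisely the claimed statement.

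The only subtle point worth spelling out in the write-up is that the quantifier structure must line up correctly: Lemma~\ref{lem:amp} requires its hypothesis to hold uniformly in $k$ and $\varepsilon$, and Proposition~\ref{prop:conditionallower} indeed delivers the non-inclusion uniformly in these parameters, so the application is legitimate. There is no genuine obstacle here; the substantive content of the theorem is already carried by the tighter reduction inside Proposition~\ref{prop:conditionallower} (one extra DFA per $k$) together with the self-reducibility via the Cartesian product construction of \cite{rabin1959} that powers Lemma~\ref{lem:amp}. Consequently, the proof of Theorem~\ref{conditionallower1} should be a short two-line deduction rather than a fresh argument.
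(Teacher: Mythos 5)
Your proposal matches the paper's proof exactly: both invoke Proposition~\ref{prop:conditionallower} to get the $(k+1)$-$\dfaint$ lower bound under the Binary Space Conjecture, then apply Lemma~\ref{lem:amp} with $c=1$ to drop the offset. The quantifier check you note is correct and the two-line deduction is precisely what the paper does.
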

\begin{proof}
Combine the results of Proposition \ref{prop:conditionallower} and Lemma \ref{lem:amp} (for $c = 1$).
\end{proof}

\section{Unconditional Space Lower Bound}

From \cite{wehar2014}, we know that $\dfaint \notin \NSPACE(f(n))$ for all $f(n)$ such that $f(n)$ is $o(\frac{n}{\log(n)\log\log(n)})$.
This result is demonstrated by reducing the simulation of an $n$-space bounded Turing machine to solving $\dfaint$ for $O(n)$ DFA's each with at most $\log(n)$ states.
When encoded in binary, the $\log(n)$ state DFA's will be represented by bit strings of length $\log(n)\log\log(n)$.
The original construction from \cite{wehar2014} is missing a few important elements.
In particular, the informative configurations should also include the write bit and the move directions of both tapes to ensure that all DFA's agree on which non-deterministic transition to take.
Below we provide a refined presentation of this argument from \cite{wehar2014}.

\begin{theorem}{(\cite{wehar2014})}\label{spacelower1}
$\dfaint \notin \NSPACE(f(n))$ for all $f(n)$ such that $f(n)$ is $o(\frac{n}{\log(n)\log\log(n)})$.
\end{theorem}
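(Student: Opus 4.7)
My plan is to argue by contradiction against the nondeterministic Space Hierarchy Theorem, using a refined version of the reduction of \cite{wehar2014}. Suppose for contradiction that $\dfaint \in \NSPACE(f(n))$ for some $f(n)$ that is $o(n/(\log(n)\log\log(n)))$. For every nondeterministic Turing machine $M$ using at most $n$ space on an input $s$ of length $n$, I would construct a $\dfaint$ instance consisting of $O(n)$ DFAs each with at most $\log(n)$ states, such that the intersection is nonempty iff $M$ accepts $s$. Each such DFA admits a binary encoding of length $O(\log(n)\log\log(n))$, so the total instance size is $O(n\log(n)\log\log(n))$. The hypothesis would then yield $\NSPACE(n) \subseteq \NSPACE(f(O(n\log(n)\log\log(n)))) = \NSPACE(o(n))$, contradicting the nondeterministic Space Hierarchy Theorem.

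The reduction itself follows the blueprint of Proposition~\ref{prop:conditionallower}: the DFAs collectively read a string that is a sequence of $6$-tuples $(q, r_0, r_1, m_0, m_1, w)$, one per purported step of $M$ on input $s$. A single DFA $\mathcal{D}_0$ keeps track of the control state together with the input tape head position and locally verifies that each tuple is consistent with a legal transition of $M$. The $n$-bit worktape is partitioned into blocks of $O(\log\log(n))$ cells, and for each block I allocate one DFA $\mathcal{D}_i$ which stores that block's current contents (at most $O(\log(n))$ possibilities, so $O(\log(n))$ states suffice) together with its local head position, and which ensures that every read $r_1$ from the block matches the symbol most recently written there. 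Encoding accepting computations of $M$ as tuple streams and decoding arbitrary accepted tuple streams back into accepting computations of $M$ then yields correctness of the reduction.

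The main obstacle, and the precise place where the original argument of \cite{wehar2014} needed refining, is guaranteeing that the various DFAs commit to the same nondeterministic computation of $M$. If the informative tuples only recorded $(q, r_0, r_1)$, different DFAs could each accept the stream while implicitly validating incompatible transitions of $M$, so that the intersection could be nonempty without $M$ having any genuine accepting computation on $s$. Including the write bit $w$ together with both move directions $m_0$ and $m_1$ in every tuple pins down the full nondeterministic choice at each step and forces a uniform commitment across all DFAs, which is exactly the fix flagged in the paragraph preceding this theorem. Once this refinement is in place, confirming the state count and size bound and then invoking the space hierarchy theorem to close the contradiction is routine.
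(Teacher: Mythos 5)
Your reduction has a genuine gap: you carry over the $6$-tuple format $(q, r_0, r_1, m_0, m_1, w)$ from Proposition~\ref{prop:conditionallower}, but this theorem's construction must also place the tape head positions $h_0$ and $h_1$ in the tuple stream (the paper uses $8$-tuples). In Proposition~\ref{prop:conditionallower} the worktape has only $O(\log n)$ cells, so each block DFA can afford to track the \emph{global} worktape head position internally with $O(\log n)$ states, and no $h_1$ field is needed. Here the worktape has $\Theta(n)$ cells, so a per-block DFA with $O(\log n)$ states cannot track where the head is. Your proposed $\mathcal{D}_i$ ``stores \dots its local head position,'' but with only $m_1$ visible it has no way to detect when the head re-enters its block after leaving: distinguishing ``head is one cell to the left'' from ``head is far to the left'' requires $\Omega(n)$ states. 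The fix in the paper is to make the purported positions $h_0, h_1$ part of the input string itself; then each position-$p$ DFA simply compares the broadcast $h_1$ against its hard-coded $p$ (an $O(\log n)$-state check), and a separate group of $O(n)$ DFAs verifies that the $h$-sequence is consistent with the $m$-fields step by step, each DFA being responsible for the transitions out of its own assigned position.

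You did correctly identify the $w, m_0, m_1$ refinement that the paper flags as missing from \cite{wehar2014} and the reason it is needed (pinning down a single nondeterministic choice per step across all DFAs), and the overall contradiction-via-Space-Hierarchy framing matches the paper. But dropping $h_0, h_1$ is not a cosmetic simplification: it is the difference between the $k$-block regime of Proposition~\ref{prop:conditionallower} (small worktape, constantly many blocks, no broadcast heads) and the $\Theta(n)$-block regime here (large worktape, broadcast heads indispensable). As written, the per-block worktape DFAs cannot be implemented in $O(\log n)$ states, so the instance-size bound and hence the whole contradiction would not go through.
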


\begin{proof}
Let a language $L \in \NSPACE(n)$ and an input string $s$ of length $n$ be given.  Therefore, there exists a non-deterministic Turing machine $M$ that decides $L$ using at most $O(n)$ space.
We construct $O(n)$ DFA's each with at most $O(\log(n))$ states such that $M$ accepts $s$ if and only if the DFA's have a non-empty intersection.
The DFA's read a sequence of $8$-tuples $(q, h_0, h_1, r_0, r_1, m_0, m_1, w)$ encoded in binary such that $q$ represents the current state.
The values $h_0$ and $h_1$ were not included in Proposition \ref{prop:conditionallower}.  We use $h_0$ and $h_1$ to represent the current input and work tape positions, respectively.
We use $r_0$ and $r_1$ to represent what is currently read on the input and work tapes of $M$ on input $s$, respectively.
We use $m_0$ and $m_1$ to represent which direction the input and work tapes move, respectively.  We use $w$ to represent what will be written to the work tape.

We break the DFA's into four groups.
In the first group, we have one DFA to keep track of the current state and verify that $m_0$, $m_1$, $w$, and the next tuple's $q$ all correspond with a valid transition from the current state.
This DFA has $O(\vert M \vert^2)$ states.
In the second group, we have $O(n)$ DFA's for verifying the correctness of the $h_0$ and $r_0$ values.  Each DFA is assigned an input tape position.
For this position, the DFA will verify that the read bit $r_0$ is correct.  It will also check that the following tuple's $h_0$ value is one more or one less depending on $m_0$'s value.
Each DFA has $O(\log(n))$ states because it essentially just needs to read and check that $O(1)$ many bit strings of length $O(\log(n))$ appropriately match.
In the third group, we have $O(n)$ DFA's for verifying the correctness of the $h_1$ and $r_1$ values.  Each DFA has $O(\log(n))$ states and is constructed similar to the second group.
In the fourth group, we have $O(n)$ DFA's for keeping track of the worktape contents.  Each DFA is assigned a worktape position.
For this position, the DFA keeps track of the bit currently stored there, verifying that the $r_1$ bit matches when the DFA's position is $h_1$, and updates the stored
worktape content based on $w$ when appropriate.
Each DFA has $O(\log(n))$ states because it again just needs to read and check that $O(1)$ many bit strings of length $O(\log(n))$ appropriately match.

It remains to select the initial states appropriately with all of the tape heads moved to the left and with the worktape initially configured with $0$'s.
In addition, transitions to dead states are made when the checks that the DFA's are assigned to fail.  Also, for the first group DFA, final states are selected for when $q$ is final in $M$
and the other DFA's have all states as final that aren't dead states.
By combining this reduction with the Non-Deterministic Space Hierarchy Theorem from \cite{conl1,conl2,nspace}, we obtain the desired result.
\end{proof}

\begin{remark}
We suggest that there is an optimization for constructing the second and third group DFA's.  In particular, we can break the tapes into contiguous blocks of size $O(\log\log(n))$.
Then, we can accomplish their verification tasks using $O(\frac{n}{\log\log(n)})$ DFA's each with $O(\log(n))$ states.
\end{remark}

\begin{remark}
It was suggested by M. Oliveira (and shown in manuscript \cite{oliveira2020manuscript}) that the DFA's obtained from the construction of Theorem \ref{spacelower1} can be represented
in a succinct way that doesn't involve encoding the entire graph structure.
In particular, each DFA can be succinctly represented using only $\log(n)$ bits instead of $\log(n)\log\log(n)$ bits.
Furthermore, $\dfaint$ over succinctly represented DFA's is not non-deterministically solvable in $o(\frac{n}{\log(n)})$ space.
\end{remark}

\section{Unconditional Time Lower Bounds}\label{sec:timelower}

In this section, we combine Theorem \ref{spacelower1} from the preceding section with the space efficient simulation of deterministic time bounded multitape Turing machines
from \cite{williams2025} to show that $\dfaint \notin \DTIME(n^{2 - \varepsilon})$ for all $\varepsilon > 0$.  In other words, $\dfaint$ is not solvable deterministically in
strongly subquadratic time.  We proceed with the following lemma.

\begin{lemma}\label{lem:lower}
If $\dfaint \notin \DSPACE(f(n))$, then $\dfaint \notin \DTIME(\frac{f(n)^2}{\log(f(n))})$.
\end{lemma}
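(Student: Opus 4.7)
The plan is to prove the contrapositive: if $\dfaint \in \DTIME(\frac{f(n)^2}{\log(f(n))})$, then $\dfaint \in \DSPACE(f(n))$. The only nontrivial ingredient needed is the recent space-efficient simulation from \cite{williams2025}, which states that $\DTIME(t(n)) \subseteq \DSPACE(\sqrt{t(n)\log(t(n))})$ for multitape Turing machines, together with the standard tape compression theorem that collapses constant factors in deterministic space.

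First I would set $t(n) = \frac{f(n)^2}{\log(f(n))}$ and feed this bound directly into the Williams simulation to obtain $\dfaint \in \DSPACE(\sqrt{t(n)\log(t(n))})$. The key calculation is then to bound $\sqrt{t(n)\log(t(n))}$ in terms of $f(n)$. Since
$$\log(t(n)) = 2\log(f(n)) - \log\log(f(n)) \leq 2\log(f(n)),$$
we get
$$t(n)\log(t(n)) \leq \frac{f(n)^2}{\log(f(n))} \cdot 2\log(f(n)) = 2 f(n)^2,$$
and therefore $\sqrt{t(n)\log(t(n))} \leq \sqrt{2}\, f(n) = O(f(n))$.

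Finally, I would apply tape compression for deterministic space to conclude $\DSPACE(O(f(n))) = \DSPACE(f(n))$, which gives $\dfaint \in \DSPACE(f(n))$ and contradicts the hypothesis that $\dfaint \notin \DSPACE(f(n))$. I do not anticipate a real obstacle; the statement is essentially engineered so that plugging $t(n) = f(n)^2/\log(f(n))$ into the Williams bound simplifies cleanly to $O(f(n))$ space. The only minor care needed is that $f(n)$ be large enough for $\log(f(n))$ and $\log\log(f(n))$ to be well-behaved, which is harmless since we only use the conclusion asymptotically when combined with Theorem~\ref{spacelower1} to derive a concrete time lower bound in the next step.
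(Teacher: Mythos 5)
Your proposal is correct and follows essentially the same route as the paper: prove the contrapositive by plugging $t(n) = f(n)^2/\log(f(n))$ into the Williams $\DTIME(t(n)) \subseteq \DSPACE(\sqrt{t(n)\log(t(n))})$ bound and using tape compression to absorb the constant factor. The only difference is that you spell out the short calculation showing $\sqrt{t(n)\log(t(n))} = O(f(n))$, which the paper leaves implicit.
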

\begin{proof}
We will prove the contrapositive.  Suppose that $\dfaint \in \DTIME(\frac{f(n)^2}{\log(f(n))})$.
By applying the space efficient simulation\footnote{We refer the reader to the discussion from Section \ref{sec:spaceefficient}.} of time bounded computations from \cite{williams2025}, we have that $\DTIME(\frac{f(n)^2}{\log(f(n))}) \subseteq \DSPACE(f(n))$.
Therefore, it follows that $\dfaint \in \DSPACE(f(n))$.
\end{proof}

\begin{theorem}\label{timelower2}
$\dfaint \notin \DTIME(t(n))$ for all $t(n)$ such that $t(n)$ is $o(\frac{n^2}{\log^3(n) \log\log^2(n)})$.
\end{theorem}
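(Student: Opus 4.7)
The plan is to combine the unconditional space lower bound of Theorem \ref{spacelower1} with the square-root space simulation of \cite{williams2025}, as already packaged in Lemma \ref{lem:lower}. I proceed by contrapositive: fix an arbitrary $t(n)$ that is $o(\frac{n^2}{\log^3(n)\log\log^2(n)})$ and assume toward a contradiction that $\dfaint \in \DTIME(t(n))$.

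Applying the Williams simulation, I obtain $\dfaint \in \DSPACE(s(n))$ with $s(n) := \sqrt{t(n)\log(t(n))}$, and therefore also $\dfaint \in \NSPACE(s(n))$ via the trivial inclusion $\DSPACE \subseteq \NSPACE$. The remaining task is the asymptotic estimate $s(n) = o(\frac{n}{\log(n)\log\log(n)})$, which directly contradicts Theorem \ref{spacelower1}. Since $t(n) = O(n^2)$ we have $\log(t(n)) = O(\log(n))$, hence
\[
s(n)^2 \;=\; t(n)\log(t(n)) \;=\; o\!\left(\tfrac{n^2}{\log^3(n)\log\log^2(n)}\right) \cdot O(\log(n)) \;=\; o\!\left(\tfrac{n^2}{\log^2(n)\log\log^2(n)}\right),
\]
and taking square roots yields the desired bound.

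Alternatively, one can package the same argument through Lemma \ref{lem:lower} applied with the auxiliary function $f(n) := s(n) = \sqrt{t(n)\log(t(n))}$: the bound just verified places $f(n)$ within the regime where Theorem \ref{spacelower1} forces $\dfaint \notin \DSPACE(f(n))$, and then Lemma \ref{lem:lower} returns $\dfaint \notin \DTIME(\frac{f(n)^2}{\log(f(n))}) = \DTIME(\Theta(t(n)))$, contradicting the hypothesis. Either route gives the theorem.

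The argument is essentially bookkeeping once the two tools of Theorem \ref{spacelower1} and Lemma \ref{lem:lower} are in hand, so there is no substantial structural obstacle. The only subtle point, and the reason the time bound carries a $\log^3(n)$ rather than a $\log^2(n)$ in the denominator, is that exactly one of the three $\log(n)$ factors must be spent to absorb the $\log(t(n))$ factor produced by the Williams simulation before taking the square root; I would flag this explicitly when writing the calculation so that the exponents on $\log(n)$ and $\log\log(n)$ match the statement.
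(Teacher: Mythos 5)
Your proposal is correct and follows essentially the same route as the paper: the paper's proof is a one-line invocation of Lemma~\ref{lem:lower} together with Theorem~\ref{spacelower1}, and your contrapositive argument is precisely the unpacking of that combination, with the asymptotic bookkeeping (absorbing the $\log(t(n))$ factor before taking the square root) made explicit rather than left implicit.
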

\begin{proof}
Combine Theorem \ref{spacelower1} with Lemma \ref{lem:lower} where $f(n)$ is $o(\frac{n}{\log(n)\log\log(n)})$.
\end{proof}

\begin{corollary}\label{timelower3}
$\dfaint \notin \DTIME(n^{2-\varepsilon})$ for all $\varepsilon > 0$.
\end{corollary}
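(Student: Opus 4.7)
The plan is to derive the corollary as an immediate asymptotic consequence of Theorem \ref{timelower2}. The key observation I would isolate is that any polynomial savings $n^{-\varepsilon}$ eventually dominates any polylogarithmic factor: for every fixed $\varepsilon > 0$, the ratio
\[
\frac{n^{2-\varepsilon}}{\,n^{2} / (\log^{3}(n)\,\log\log^{2}(n))\,} \;=\; \frac{\log^{3}(n)\,\log\log^{2}(n)}{n^{\varepsilon}}
\]
tends to $0$ as $n \to \infty$, since $n^{\varepsilon}$ outgrows any fixed power of $\log(n)$ and $\log\log(n)$. Hence $n^{2-\varepsilon}$ is $o\!\left(\tfrac{n^{2}}{\log^{3}(n)\,\log\log^{2}(n)}\right)$.

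Given this comparison, I would then instantiate Theorem \ref{timelower2} with $t(n) := n^{2-\varepsilon}$. The hypothesis of that theorem is exactly the little-$o$ bound just verified, so its conclusion yields $\dfaint \notin \DTIME(n^{2-\varepsilon})$. Since $\varepsilon > 0$ was arbitrary, the corollary follows.

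There is no substantive obstacle: the entire argument is a one-step unwrapping of Theorem \ref{timelower2} via a routine growth-rate comparison between a polynomial and a polylogarithm. The only place that warrants a line of care is stating the little-$o$ comparison explicitly so that the reader sees why the $\log^{3}(n)\log\log^{2}(n)$ overhead in the denominator of Theorem \ref{timelower2} does not prevent us from ruling out every strongly subquadratic running time.
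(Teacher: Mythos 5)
Your proof is correct and takes exactly the paper's approach: the paper's own proof is the one-liner ``Follows directly from Theorem \ref{timelower2},'' and you have simply made explicit the routine little-$o$ comparison (that $n^{2-\varepsilon}$ is $o(n^2/(\log^3(n)\log\log^2(n)))$ for every fixed $\varepsilon > 0$) that justifies that direct deduction.
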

\begin{proof}
Follows directly from Theorem \ref{timelower2}.
\end{proof}

\begin{remark}
It is worth noting that a restricted variant of this result can still be obtained even without the recent breakthrough on space efficient simulation.
In particular, we can apply the classic result of \cite{hopcroft2} on space efficient simulation of deterministic time bounded one-tape offline Turing machines
to obtain that $\dfaint \notin \DTIME_1(n^{2 - \varepsilon})$ for all $\varepsilon > 0$.
\end{remark}

The preceding lower bounds can be improved by allowing for a limited amount of non-determinism.  To demonstrate this, we define a complexity measure $\GC$
which is adapted from the guess and check complexity classes in \cite{guesscheck}.  Before proceeding to our definition, we first define basic properties of functions.
We say that a function $f: \mathbb{N} \rightarrow \mathbb{N}$
is non-decreasing if $n_1 \leq n_2$ implies $f(n_1) \leq f(n_2)$ for all $n_1$, $n_2 \in \mathbb{N}$.  Similar to \cite{subadditive}, we say that $f(n)$ is subadditive if
$f(n_1 + n_2) \leq f(n_1) + f(n_2)$ for all $n_1$, $n_2 \in \mathbb{N}$.  We say that $f(n)$ is sublinear if $f(n)$ is $o(n)$.  As introduced in \cite{constructible}, we say that $f(n)$ is space-constructible
if $f(n)$ can be computed in $O(f(n))$-space by a Turing transducer.  Finally, we say that $f(n)$ is a \textbf{qualified witness} if it is non-decreasing, subadditive, sublinear, and space-constructible.

Now, given a language $L$ and a qualified witness $g(n)$, a verifier for $L$ with witness length $g(n)$
is a deterministic Turing machine $M$ satisfying the following.  For every string $x$ of length $n$, $x \in L$ if and only if there exists a bit string $w$ of
length at most $g(n)$ such that $x \# w$ is accepted by $M$. %
Furthermore, we write $L \in \GC(g(n), t(n))$ if there exists a verifier for $L$ with witness length $g(n)$ that runs for at most
$O(t(n))$ steps.  Notice that for the witness length bound $g(n)$, the input $n$ is the length of $x$.  However, for the time bound $t(n)$, the
input $n$ is the length of $x \# w$. In other words, the time is measured in terms of the combined input and witness lengths.
We proceed by making an improvement to Lemma \ref{lem:lower}.

\begin{lemma}\label{lem:guesscheck}
Let a qualified witness $g(n)$ be given.  If $\dfaint \notin \DSPACE(g(n))$, then
$\dfaint \notin \GC(g(n), \frac{g(n)^2}{\log(g(n))})$.
\end{lemma}
\begin{proof}
We will prove the contrapositive.  Suppose $\dfaint \in \GC(g(n), \frac{g(n)^2}{\log(g(n))})$.
Hence, $\dfaint$ has a verifier $M$ with witness length $g(n)$ that runs for $O(\frac{g(n)^2}{\log(g(n))})$ steps.
By applying the space efficient simulation of time bounded computations from \cite{williams2025}, we get a new verifier $M^{\prime}$ such that
$L(M) = L(M^{\prime})$ and $M^{\prime}$ uses at most $O(g(n))$ space.  We can now convert $M^{\prime}$ into a deterministic machine $M^{\prime\prime}$
that on an input of length $n$ runs the verifier $M^{\prime}$ on all witnesses of length $m$ such that $m \leq g(n)$.  If a witness $w$
is found such that $x \# w$ is accepted by $M^{\prime}$, then accept $x$.  Otherwise, if no witness is found, then reject $x$.  Therefore, $M^{\prime\prime}$
decides $\dfaint$.  Furthermore, we can do this simulation using at most $O(g(n) + g(n + g(n) + 1))$ space because it only takes $O(g(n))$ space to write down
the witness and then $O(g(n + g(n) + 1))$ space to simulate $M^{\prime}$.  Because $g(n)$ is subadditive, $g(n + g(n) + 1) \leq g(n) + g(g(n)) + g(1)$.
Because $g(n)$ is sublinear, $g(n)$ is $o(n)$.  Combining with $g(n)$ being non-decreasing implies that $g(g(n))$ is $O(g(n))$.  Therefore, the total
space is $O(g(n))$.  It follows that $\dfaint \in \DSPACE(g(n))$.
\end{proof}

\begin{corollary}\label{timelower4}
Let a qualified witness $g(n)$ be given.  If $g(n)$ is $o(\frac{n}{\log(n)\log\log(n)})$, then we have $\dfaint \notin \GC(g(n), \frac{g(n)^2}{\log(g(n))})$.
Furthermore, $\dfaint \notin \GC(n^{1-\varepsilon}, n^{2-\varepsilon})$ for all $\varepsilon > 0$.
\end{corollary}
\begin{proof}
We obtain the desired result by applying Lemma \ref{lem:guesscheck} similar to how Lemma \ref{lem:lower} was applied to prove Theorem \ref{timelower2}
and Corollary \ref{timelower3}.
\end{proof}

We consider a related problem called non-emptiness for two-way deterministic finite automata.  We denote this problem by $\twodfa$.  A two-way
deterministic finite automaton (2DFA) is an automaton that can switch between moving left or right while reading the input string.  The input string also
has special left and right end-marker characters so that the automaton can detect when it has reached the end of the input string.  We now define the
$\twodfa$ problem as follows.  Given a 2DFA, does there exist a string that is accepted by the automaton?  Similar to $\dfaint$, the $\twodfa$ problem is
well known to be $\PSPACE$-complete \cite{hunt1973,galil1976}.  In the following, we see that the lower bounds from Theorem \ref{timelower2}, Corollary
\ref{timelower3}, and Corollary \ref{timelower4} also apply to $\twodfa$.

\begin{corollary}\label{timelower5}
The time complexity lower bounds for $\dfaint$ from this section also apply to the $\twodfa$ problem.  Furthermore, $\twodfa \notin \GC(n^{1-\varepsilon}, n^{2-\varepsilon})$ for all $\varepsilon > 0$.
\end{corollary}
\begin{proof}
From the argument in Theorem \ref{spacelower1}, we know that simulating an $O(n)$-space bounded non-deterministic machine can be reduced to solving $\dfaint$
for an instance with $O(n)$ DFA's each with at most $O(\log(n))$ states.  We further reduce intersecting $O(n)$ DFA's each with at most $O(\log(n))$ states
to solving non-emptiness for a 2DFA with $O(n \cdot \log(n))$ states which is encoded as a binary input string of length $O(n \cdot \log(n) \cdot \log\log(n))$.
By combining with the Non-Deterministic Space Hierarchy Theorem from \cite{conl1,conl2,nspace}, we get that $\twodfa$ is not non-deterministically solvable in
$o(\frac{n}{\log(n)\log\log(n)})$ space.  Therefore, by applying similar arguments to Theorem \ref{timelower2}, Corollary \ref{timelower3}, and
Corollary \ref{timelower4}, we obtain the desired result.

All that remains is to present the reduction from $\dfaint$ to $\twodfa$.  Given $O(n)$ DFA's each with at most $O(\log(n))$ states, we construct an associated 2DFA
by stitching together the one-way DFA's.  Essentially, the 2DFA starts by simulating the first DFA on the input string
moving to the right.  Once the right end-marker is reached, if the first DFA has reached a final state, then move all the way back to the
beginning of the input string.  Once the left end-marker is reached, start simulating the second DFA on the input string moving to the right.  We repeat
this process for each DFA.  If we successfully simulate all of the DFA's and the last DFA's final state is reached, then we accept.  If we ever don't reach
a final state when at the right end-marker, then we go to a dead state and reject.  The resulting 2DFA will have $O(n \cdot \log(n))$ states and can
be encoded as a binary input string of length $O(n \cdot \log(n) \cdot \log\log(n))$.  In addition, this 2DFA is efficient to construct as required.
\end{proof}

\section{Hardness Hypothesis}\label{sec:hardhypo}

In the preceding section, we proved an unconditional time complexity lower bound for the $\dfaint$ problem.
We did so by first showing that every language in $\NSPACE(\frac{n}{\log(n)\log\log(n)})$ is efficiently reducible to $\dfaint$.
Then, we observed that $\DTIME(n^{2-\varepsilon}) \subseteq \NSPACE(\frac{n}{\log(n)\log\log(n)})$ for all $\varepsilon > 0$ by applying \cite{williams2025}.
Therefore, every language in $\DTIME(n^{2-\varepsilon})$ is efficiently reducible to $\dfaint$.
The reductions run in nearly linear time because a Turing machine for a language in $\DTIME(n^{2-\varepsilon})$ is fixed
meaning that we basically just need to take the input string and hard code it into the DFA's from the construction.
Furthermore, these DFA's are almost entirely determined by the Turing machine and the input size.

From the preceding, we observe that $\dfaint$ is $\DTIME(n^{2 - \varepsilon})$-hard under nearly linear time reductions. %
In this section, we consider the possibility that $k$-$\dfaint$ is hard for a superlinear fixed polynomial time complexity class for some $k \in \mathbb{N}$.
If we consider hardness under logspace reductions, then this would clearly be difficult to prove because it would imply that $\NL = \PTIME$ since $k$-$\dfaint \in \NL$.
However, what if we consider hardness under polynomial time bounded reductions?  This leads us to the following hardness hypothesis.

\begin{hypothesis}
$k$-$\dfaint$ is $\DTIME(n^{\beta + \varepsilon})$-hard under $O(n^{\beta})$ time reductions for some $k \in \mathbb{N}$, $\beta \geq 1$, and $\varepsilon > 0$.
\end{hypothesis}

In the following, we demonstrate how the hardness hypothesis has significant implications within structural complexity theory.  In particular, assuming the hypothesis,
we can repeatedly apply it to obtain that $\PTIME \subseteq \DSPACE(n^{\beta})$ and $\PSPACE = \EXPTIME$.  In other words, showing unconditional time complexity lower bounds
in this way for the $k$-$\dfaint$ problems would be at least as difficult as resolving major open problems in structural complexity theory
such as $\PTIME$ vs $\PSPACE$ and $\PSPACE$ vs $\EXPTIME$.  We proceed by making the following initial observation about the hardness hypothesis.

\begin{proposition}\label{hardnesshypo4}
The Hardness Hypothesis implies that $\NL \nsubseteq \DTIME(n^{\alpha})$ for some $\alpha > 1$.
\end{proposition}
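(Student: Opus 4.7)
The plan is to contrapose via the deterministic Time Hierarchy Theorem, exploiting the fact that $k$-$\dfaint \in \NL$ (as noted in the Background section, $\dfaint$ is solvable in $O(k \log(n))$ non-deterministic space for any fixed $k$). Assume the Hardness Hypothesis holds with parameters $k \in \mathbb{N}$, $\beta \geq 1$, and $\varepsilon > 0$. The goal is to exhibit a specific $\alpha > 1$, depending only on $\beta$ and $\varepsilon$, such that $\NL \not\subseteq \DTIME(n^{\alpha})$.

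First I would compose the hypothesized reduction with any putative deterministic algorithm for $k$-$\dfaint$. Suppose for contradiction that $k$-$\dfaint \in \DTIME(n^{\gamma})$ for some $\gamma \geq 1$. Because every language in $\DTIME(n^{\beta + \varepsilon})$ reduces to $k$-$\dfaint$ via an $O(n^{\beta})$ time reduction, the reduction produces an instance of size $O(n^{\beta})$, and then running the $\DTIME(n^{\gamma})$ algorithm on that instance costs $O(n^{\beta \gamma})$. Composing these two steps yields $\DTIME(n^{\beta + \varepsilon}) \subseteq \DTIME(n^{\beta \gamma})$.

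Next I would invoke the deterministic Time Hierarchy Theorem quoted in the Background to pin down the exponent $\gamma$. If $\beta \gamma < \beta + \varepsilon$ by any fixed amount, then $n^{\beta \gamma}$ is $o(n^{\beta + \varepsilon}/\log^{2}(n))$, which contradicts the containment just derived. Hence $\gamma \geq 1 + \varepsilon/\beta$. So any $\alpha$ with $1 < \alpha < 1 + \varepsilon/\beta$, for instance $\alpha = 1 + \varepsilon/(2 \beta)$, satisfies $k$-$\dfaint \notin \DTIME(n^{\alpha})$. Combined with $k$-$\dfaint \in \NL$, this delivers $\NL \not\subseteq \DTIME(n^{\alpha})$, as required.

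I do not anticipate a substantial obstacle. The only minor subtlety is the logarithmic slack in the Time Hierarchy Theorem, but this is painlessly absorbed by taking $\alpha$ strictly below the threshold $1 + \varepsilon/\beta$. Everything else is routine composition of a polynomial-time reduction with a polynomial-time algorithm, which is why this implication serves as a relatively cheap initial observation before the stronger consequences later in the section.
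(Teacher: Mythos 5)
Your proof is correct and takes essentially the same approach as the paper: compose the $O(n^{\beta})$ reduction with a hypothetical $O(n^{\alpha})$ algorithm for $k$-$\dfaint$ to force $\DTIME(n^{\beta+\varepsilon}) \subseteq \DTIME(n^{\alpha\beta})$, contradict the Time Hierarchy Theorem when $\alpha\beta < \beta+\varepsilon$, and conclude via $k$-$\dfaint \in \NL$. The only difference is cosmetic (you explicitly address the logarithmic slack in the hierarchy theorem, which the paper leaves implicit).
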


\begin{proof}
Suppose that the Hardness Hypothesis holds.
Hence, $k$-$\dfaint$ is $\DTIME(n^{\beta + \varepsilon})$-hard under $O(n^{\beta})$ time reductions for some $k \in \mathbb{N}$, $\beta \geq 1$, and $\varepsilon > 0$.
We cannot have $\alpha$ such that $1 < \alpha < \frac{\beta + \varepsilon}{\beta}$ and $k$-$\dfaint \in \DTIME(n^{\alpha})$.  Otherwise, we would get that
$\DTIME(n^{\beta + \varepsilon}) \subseteq \DTIME(n^{\alpha \cdot \beta})$, but this violates the Time Hierarchy
Theorem \cite{time1,time2} because $\alpha \cdot \beta < \beta + \varepsilon$.  Since we also have that $k$-$\dfaint \in \NL$, it follows that $\NL \nsubseteq \DTIME(n^{\alpha})$ for all $\alpha$ such that
$1 < \alpha < \frac{\beta + \varepsilon}{\beta}$.
\end{proof}

In the following, we carefully consider alternating time bounded computations with a restricted amount of alternations and
total witness length.  We denote by $\ATIME^{a(n)}_{w(n)}(t(n))$ the class of languages that can be decided by an alternating
Turing machine in $O(t(n))$ time with at most $a(n)$ alternations and at most $w(n)$ combined binary witness length across all quantifiers.
Furthermore, we write $\texttt{SP}$ as an abbreviation for subpolynomial.  That is,
$\SPATIME(t(n)) = \bigcap_{\varepsilon > 0} \ATIME^{O(1)}_{n^{\varepsilon}}(t(n))$.
We obtain the following proposition by carefully analyzing the alternations and combined witness length from the known proof of
$\NTISP(n^k, n^{1 - \varepsilon}) \subseteq \ATIME(n)$ for all
$k \in \mathbb{N}$ and $\varepsilon > 0$ \cite{atime1,atime2,atime3,atime4}.

\begin{proposition}\label{prop:atime1}
Let $k \in \mathbb{N}$ and $\varepsilon > 0$ be given.  We have $\NTISP(n^k, n^{1 - \varepsilon}) \subseteq \ATIME^{O(1)}_{o(n)}(n)$.
Furthermore, for any subpolynomial function $g(n)$, we have $\NTISP(n^k, g(n)) \subseteq \SPATIME(n)$.
\end{proposition}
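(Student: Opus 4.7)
The plan is to adapt the standard Nepomnjascii--Kannan-style alternating simulation of space-bounded computations, carefully tracking both the number of alternations and the combined binary witness length. The core building block is the recursive inclusion
$$\NTISP(T, S) \subseteq \exists^{O(bS)}\, \forall^{O(\log b)}\, \NTISP(T/b, S),$$
obtained by existentially guessing $b$ intermediate configurations of size $O(S)$ (including input head positions), universally selecting one of the $b$ resulting segments, and then verifying that segment as a nondeterministic subcomputation of length $T/b$ and space $S$. This step adds two alternations and $O(bS + \log b)$ new witness bits.

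Next, I would iterate this reduction $c$ times with each $b_i := (T/S)^{1/c}$, obtaining
$$\NTISP(T, S) \subseteq \underbrace{\bigl(\exists^{O(bS)}\, \forall^{O(\log T)}\bigr)^c}_{2c \text{ alternations}}\, \NTISP(S, S).$$
The innermost $\NTISP(S, S)$ is dispatched by one additional existential guess of the remaining $O(S)$ nondeterministic bits followed by a deterministic simulation in time and space $O(S)$. The resulting alternating machine uses $O(c)$ total alternations, has combined witness length $O(c \cdot S \cdot (T/S)^{1/c})$, and runs in total time $O(n)$ (for input reading plus witness consumption plus the base case).

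For the first claim, I would set $T = n^k$, $S = n^{1-\varepsilon}$, and pick $c := \lceil (k - 1 + \varepsilon)/\varepsilon \rceil + 1$; since this $c$ is constant and satisfies $(k - 1 + \varepsilon)/c < \varepsilon$, the combined witness length is $O(n^{1 - \delta})$ for some $\delta > 0$, hence $o(n)$, with $O(1)$ alternations and $O(n)$ total time. For the second claim, given any subpolynomial $g(n)$ and any $\varepsilon > 0$, I would pick $\delta = \varepsilon/2$ and $c := \lceil 2(k - \delta)/\varepsilon \rceil$; since $g(n) = o(n^{\delta})$ we have $\NTISP(n^k, g(n)) \subseteq \NTISP(n^k, n^{\delta})$, and the previous analysis yields combined witness length $O(n^{\varepsilon})$ with $O(1)$ alternations and $O(n)$ total time. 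As $\varepsilon$ was arbitrary, this gives $\NTISP(n^k, g(n)) \subseteq \SPATIME(n)$.

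The main obstacle is purely bookkeeping: ensuring that each witness is read in a streaming fashion within the linear time budget and that the base-case $\NTISP(S, S)$ computation fits on the alternating machine's work tapes alongside enough scratch space to verify consistency between guessed configurations and nondeterministic transitions. These aspects are standard in the cited proofs of the weaker inclusion $\NTISP(n^k, n^{1 - \varepsilon}) \subseteq \ATIME(n)$; the only genuinely new content here is tuning the recursion depth $c$ to a constant and verifying that the combined witness length satisfies the required $o(n)$ or $O(n^{\varepsilon})$ bound.
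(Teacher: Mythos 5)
Your proposal is correct and takes essentially the same approach as the paper's proof: both recursively guess a batch of intermediate configurations existentially, universally branch on a segment, and tune a constant parameter so the recursion bottoms out with $O(1)$ alternations and combined witness length bounded by a polynomial of degree strictly less than $1$ (resp.\ less than any fixed $\varepsilon > 0$ for the subpolynomial case). The only cosmetic difference is parametrization — you fix the recursion depth $c$ and set the branching factor to $(T/S)^{1/c}$, stopping at a base case $\NTISP(S,S)$, whereas the paper fixes a branching exponent $\alpha$ (so $n^{\alpha}$ configurations per round) and recurses to segments of length one — and both lead to the identical bound structure of $(\text{rounds}) \cdot (\text{branching}) \cdot (\text{configuration size})$.
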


\begin{proof}
Let $k \in \mathbb{N}$ and $\varepsilon > 0$ be given.  Let a language $L \in \NTISP(n^k, n^{1 - \varepsilon})$ be given.  Consider a non-deterministic
Turing machine $M$ that decides $L$ in at most $O(n^{k})$ time using at most $O(n^{1 - \varepsilon})$ space.  We proceed similar to
\cite{atime1,atime2,atime3,atime4} where we use existential quantifiers ($\exists$) to guess tape configurations of size
$O(n^{1 - \varepsilon})$ and then universal quantifiers ($\forall$) to check that for every adjacent pair of configurations, there exists a valid
computation leading from the earlier configuration to the later configuration within the pair.  Now, consider making the existential quantifiers guess $n^{\alpha}$
many configurations for $\alpha > 0$.  Consider repeatedly alternating between quantifiers in this way until configurations are only one computational step apart.
This results in $\frac{2 k}{\alpha}$ alternations and $O(\frac{k}{\alpha} \cdot n^{\alpha} \cdot n^{1 - \varepsilon})$ combined binary witness length.
Since $k$ and $\alpha$ are constants that do not depend on $n$, we have $O(1)$ alternations and $O(n^{1 - \varepsilon + \alpha})$ witness length.
If we pick $\alpha = \frac{\varepsilon}{2}$, then we have $O(n^{1 - \frac{\varepsilon}{2}}) = o(n)$ witness length.
Therefore, $\NTISP(n^k, n^{1 - \varepsilon}) \subseteq \ATIME^{O(1)}_{o(n)}(n)$ for all $k \in \mathbb{N}$ and $\varepsilon > 0$.
If the Turing machine instead uses $g(n)$ spaces where $g(n)$ is subpolynomial, then this same construction yields an alternating
simulation with $O(1)$ alternations and $O(g(n) \cdot n^{\alpha})$ witness length.  Since this works for all $\alpha > 0$, we obtain
$\NTISP(n^k, g(n)) \subseteq \SPATIME(n)$.
\end{proof}

Next, assuming the Hardness Hypothesis, we repeatedly apply it similar to a speed-up argument to obtain implications about the relationships between time
and space complexity.  To do so, we define a complexity measure that we denote by $\DTIRE$ (Time-Reducible).  In particular, we define
$\DTIRE(t(n), s(n))$ to be the set of languages that can be reduced in $O(t(n))$ time to a language in $\NSPACE(s(n))$.

\begin{theorem}\label{hardnesshypo2}
Let $\beta \geq 1$ be given.  If $k$-$\dfaint$ is $\DTIME(n^{\beta + \varepsilon})$-hard under $O(n^{\beta})$ time reductions for some $k \in \mathbb{N}$ and $\varepsilon > 0$, then $\PTIME \subseteq \SPATIME(n^{\beta}) \subseteq \DSPACE(n^{\beta})$.  Moreover, the Hardness Hypothesis implies that $\PTIME \subseteq \DSPACE(n^{c})$ for some $c > 0$.
\end{theorem}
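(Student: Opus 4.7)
The plan is to prove the main inclusion $\PTIME \subseteq \SPATIME(n^{\beta})$; the second inclusion $\SPATIME(n^{\beta}) \subseteq \DSPACE(n^{\beta})$ is immediate from the standard bound $\ATIME(t(n)) \subseteq \DSPACE(t(n))$, and the moreover clause then follows with $c = \beta$ for the $\beta$ supplied by the Hardness Hypothesis. For the base case, first establish $\DTIME(n^{\beta+\varepsilon}) \subseteq \SPATIME(n^{\beta})$: given $L \in \DTIME(n^{\beta+\varepsilon})$, the hypothesis supplies an $O(n^{\beta})$-time deterministic reduction $f$ to a $k$-$\dfaint$ instance of size $O(n^{\beta})$. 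Since $k$-$\dfaint \in \NL \subseteq \NTISP(m^{O(1)}, \log m)$, Proposition~\ref{prop:atime1} puts $k$-$\dfaint$ on size-$m$ inputs into $\SPATIME(m)$; at $m = O(n^{\beta})$ this is $\SPATIME(n^{\beta})$. Prepending $f$ to the alternating decider adds no alternations and no witness, so $L \in \SPATIME(n^{\beta})$. Arbitrary $L \in \DTIME(n^d)$ are then handled by padding: expand $x$ of length $n$ to length $N := \lceil n^{d/(\beta+\varepsilon)} \rceil$, so the padded language lies in $\DTIME(N^{\beta+\varepsilon}) \subseteq \SPATIME(N^{\beta})$, which transfers back to $L \in \SPATIME(n^{d\delta})$ with $\delta := \beta/(\beta+\varepsilon) < 1$.

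For $d\delta > \beta$, a recursive speed-up is required. The key observation is that the reduction $f$ produced above is itself a $\DTIME(n^{d\delta})$ computation, so each of its output bits is a decision problem to which the same hypothesis applies; one further round of padding and reduction places each bit in $\SPATIME(n^{d\delta^2})$. Iterating, after $j$ applications each bit of the layer-$(j-1)$ reduction is in $\SPATIME(n^{d\delta^j})$; choose $j$ as the smallest integer with $d\delta^j \leq \beta$, which exists since $\delta < 1$. To splice the $j$ nested reductions into a single alternating machine running in $\SPATIME(n^{\beta})$, I would use a scaled form of Proposition~\ref{prop:atime1}: reading its proof carefully, the same argument places an $\NL$ instance of size $m$ into $\SPATIME(m^\tau)$ for any constant $\tau > 0$ by shrinking the guess-size parameter $\alpha$. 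At each layer choose $\tau$ so that the layer's alternating time is $O(n^{\beta})$, and whenever that layer queries a bit of its reduction output, branch under a $\forall$ quantifier that dispatches verification of the bit to the next inner layer. Running inner verifications in parallel branches prevents inner times from compounding over the many queries.

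Each of the $j = O(1)$ layers contributes $O(1)$ alternations and a subpolynomial witness increment, so the final machine has $O(1)$ alternations, subpolynomial total witness, and alternating time $O(n^{\beta})$, yielding $L \in \SPATIME(n^{\beta})$. The main obstacle I expect is the time accounting across the nested composition: naively chaining the $j$ reductions sequentially would produce total time on the order of $O(n^{d\delta/(1-\delta)}) = O(n^{d\beta/\varepsilon})$, easily exceeding $n^{\beta}$. Escaping this blow-up requires both the scaled Proposition~\ref{prop:atime1} to force each layer's own alternating time down to $n^{\beta}$, and universal-quantifier parallelism so that inner bit verifications do not multiply with the number of accesses; one must also verify that the $O(j)$ alternation increments and the accumulated subpolynomial witnesses genuinely sum to $O(1)$ and to a subpolynomial bound overall.
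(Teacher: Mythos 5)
Your padding argument for the base case is correct and matches the paper's. But your recursion takes a fundamentally different route from the paper's, and the route you chose has a real gap.

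The pivotal claim you rely on is that Proposition~\ref{prop:atime1} can be ``scaled'' to place an $\NL$ instance of size $m$ into $\SPATIME(m^\tau)$ for any $\tau > 0$ by shrinking the guess-size parameter $\alpha$. This is false. In the proof of Proposition~\ref{prop:atime1}, the parameter $\alpha$ only trades off witness length $O(m^\alpha s(m))$ against number of alternations $O(k/\alpha)$; it has no effect on the alternating \emph{time}, which is $\Omega(m)$ because the leaf verifier of the alternation tree must position a Turing-machine head at an arbitrary index of the $m$-bit input. Sublinear alternating time for an $\NL$ instance is impossible in this model if the instance is given explicitly. Your other idea --- treating the reduction's output bits as their own decision problems and dispatching each query under a $\forall$ branch to an inner layer --- is the mechanism that could in principle save you, but you have conflated it with the scaled-proposition claim, and you admit in your final paragraph that you have not actually closed the alternation/witness accounting. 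As it stands, the argument does not compile.

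The missing idea, which is what the paper does, is much cleaner: instead of recursing on bits of the reduction, recurse on the \emph{deterministic verifier of the alternating machine}. Concretely: given $L \in \SPATIME(n^{c\cdot h(d)})$, take the alternating machine, move all quantifiers to the front, and observe that what remains is a single deterministic computation $V$ whose input is $(x, w)$ --- the original input plus a \emph{subpolynomial} witness --- so $|(x,w)| = \Theta(n)$ and $V$ runs in time $O(n^{c\cdot h(d)})$. Now the Hardness Hypothesis (with padding) applies to $V$ as a whole, placing it in $\SPATIME(n^{c\cdot h(d+1)})$. Prepend the original quantifiers back; alternations add ($O(1)+O(1)$), witnesses add (subpoly $+$ subpoly), and the time is dominated by the new, faster reduction. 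There is no chaining of reductions and no bit-level dispatch: each induction step discards the previous reduction entirely and invokes the hypothesis afresh on the single deterministic verifier. This is what makes the accounting trivially $O(1)$ alternations, subpolynomial witness, and $O(n^{\beta} + n^{c\cdot h(d)})$ time, and it is the step your proposal is missing.
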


\begin{proof}
Let $k \in \mathbb{N}$, $\beta \geq 1$, and $\varepsilon > 0$ be given.  Suppose that $k$-$\dfaint$ is $\DTIME(n^{\beta + \varepsilon})$-hard under $O(n^{\beta})$ time reductions.
Therefore, every language in $\DTIME(n^{\beta + \varepsilon})$ can be reduced in $O(n^{\beta})$ time to $k$-$\dfaint$.  Since $k$-$\dfaint \in \NL$, we have that every language in
$\DTIME(n^{\beta + \varepsilon})$ can be reduced in $O(n^{\beta})$ time to a language in $\NL = \NSPACE(\log(n))$.  Hence, $\DTIME(n^{\beta + \varepsilon}) \subseteq \DTIRE(n^{\beta}, \log(n))$.
Therefore, we have that $$\DTIME(p(n)^{\beta + \varepsilon}) \subseteq \DTIRE(p(n)^{\beta}, \log(n))$$ by a straightforward padding argument.
We can now apply $$\NL \subseteq \bigcup_{k \in \mathbb{N}}\NTISP(n^k, \log(n)) \subseteq \SPATIME(n)$$ by \mbox{Proposition \ref{prop:atime1}} to get that
$$\DTIME(p(n)^{\beta + \varepsilon}) \subseteq \DTIRE(p(n)^{\beta}, \log(n)) \subseteq \SPATIME(p(n)^{\beta}).$$

Let $c \in \mathbb{N}$ be given.  Let $h(d) = (\frac{\beta}{\beta + \varepsilon})^d$.  We proceed by showing that $\DTIME(n^c) \subseteq \SPATIME(n^{\beta} + n^{c \cdot h(d)})$ for every $d \in \mathbb{N} \cup \{0\}$.
The base case ($d = 0$) trivially holds since $\DTIME(n^c) \subseteq \SPATIME(n^c)$.  For the inductive step, suppose that $$\DTIME(n^c) \subseteq \SPATIME(n^{c \cdot h(d)})$$
for a given $d \in \mathbb{N} \cup \{0\}$.  Let $L \in \DTIME(n^c)$ be given.  By the hypothesis, $L \in \SPATIME(n^{c \cdot h(d)})$.  Consider an alternating Turing machine $M$
that decides $L$ within this time bound.  Consider a new machine $M'$ that decides $L$ where all of $M$'s exist and for all quantifiers are moved out front.
Therefore, the verification part of $M'$ is a deterministic $O(n^{c \cdot h(d)})$ time computation.

Now, we can take $p(n) = n^{c \cdot \frac{h(d + 1)}{\beta}}$ to get
$$\DTIME(n^{c \cdot h(d)}) \subseteq \DTIRE(n^{c \cdot h(d+1)}, \log(n)) \subseteq \SPATIME(n^{c \cdot h(d+1)}).$$
If we combine the alternating quantifiers with those from $M'$, then we get $$L \in \SPATIME(n^{\beta} + n^{c \cdot h(d + 1)}).$$
Since this works for any language $L \in \DTIME(n^c)$, we have $$\DTIME(n^c) \subseteq \SPATIME(n^{\beta} + n^{c \cdot h(d + 1)}).$$
Notice that an important reason why this works is because the combined witness length across all quantifiers is subpolynomial.
The base case and inductive step together imply that $\DTIME(n^c) \subseteq \SPATIME(n^{\beta} + n^{c \cdot h(d)})$ for all $d \in \mathbb{N} \cup \{0\}$.

By taking $d$ sufficiently large, we obtain $\DTIME(n^c) \subseteq \SPATIME(n^{\beta})$.  Since we also know that $\SPATIME(n^{\beta}) \subseteq \DSPACE(n^{\beta})$ \cite{alt1},
we have obtained $$\DTIME(n^c) \subseteq \SPATIME(n^{\beta}) \subseteq \DSPACE(n^{\beta})$$ for all $c \in \mathbb{N}$.  Therefore, $\PTIME \subseteq \SPATIME(n^{\beta}) \subseteq \DSPACE(n^{\beta})$.
\end{proof}

If in Proposition \ref{prop:atime1}, we instead make all of our existential quantifier blocks guess a single tape configuration,
then we have the following variation to the result from \cite{atime1,atime2,atime3,atime4}.

\begin{proposition}\label{prop:atime2}
Let functions $t(n)$ and $s(n)$ be given.  We have $$\NTISP(t(n), s(n)) \subseteq \ATIME^{a(n)}_{w(n)}(s(n)\log(t(n)))$$
for some $a(n)$ and $w(n)$ such that $a(n)$ is $O(\log(t(n)))$ and $w(n)$ is $O(s(n)\log(t(n)))$.
\end{proposition}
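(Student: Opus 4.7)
The plan is to adapt the alternating simulation scheme from Proposition \ref{prop:atime1} by making each existential block guess a single intermediate configuration rather than a block of $n^{\alpha}$ many. This turns the argument into a binary divide-and-conquer recursion on configuration trajectories, which is exactly what drives the alternation count from $O(1)$ up to $O(\log(t(n)))$ and shrinks the per-block witness from polynomial to $O(s(n))$.

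First I would fix a non-deterministic Turing machine $M$ deciding a language in $\NTISP(t(n), s(n))$, and, by a standard normalization, assume that $M$ has a unique accepting configuration $c_{\mathrm{acc}}$ (e.g.\ by clearing the work tape and parking the head before accepting). Given an input $x$ of length $n$, acceptance is equivalent to the existence of a valid computation from the initial configuration $c_0$ to $c_{\mathrm{acc}}$ of length at most $t(n)$. I would then describe the recursive predicate $\textsc{Reach}(c, c', \tau)$ meaning \emph{$M$ goes from $c$ to $c'$ in at most $\tau$ steps on input $x$}, to be evaluated by an alternating machine as follows: existentially guess a midpoint configuration $c_m$ of length $O(s(n))$, then universally choose one of the two subproblems $\textsc{Reach}(c, c_m, \lceil \tau/2 \rceil)$ or $\textsc{Reach}(c_m, c', \lfloor \tau/2 \rfloor)$ and recurse. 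The recursion halts at $\tau = 1$, where the machine deterministically checks in $O(s(n))$ time that $c'$ follows from $c$ in one step of $M$.

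The budgeting is the heart of the argument. The recursion depth is $O(\log(t(n)))$, since $\tau$ halves each level. Each level consists of one existential block followed by one universal choice, giving $a(n) = O(\log(t(n)))$ alternations overall. Each existential block writes down a single configuration of length $O(s(n))$, so the combined witness length across all quantifier blocks is $w(n) = O(s(n)\log(t(n)))$; the universal choices contribute only $O(1)$ bits per level and are absorbed into the same bound. Along any computation branch, the alternating machine writes and inspects $O(s(n))$ bits at each of the $O(\log(t(n)))$ levels plus an $O(s(n))$ base-case check, for total running time $O(s(n)\log(t(n)))$.

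The subtle step to get right is the per-level time, not the alternation or witness counts. Specifically, at each level the machine must be able to locate the current endpoints $c, c'$ and the freshly guessed $c_m$ without paying more than $O(s(n))$ time per level; this is handled by standard tape-management, organizing the witness tape into a stack of configurations so that each recursive call only touches a window of size $O(s(n))$. Once this bookkeeping is in place, plugging the counts into $\ATIME^{a(n)}_{w(n)}(\cdot)$ yields the stated containment, and I would close by observing that the construction gives the claimed $a(n) = O(\log(t(n)))$ and $w(n) = O(s(n)\log(t(n)))$ simultaneously with runtime $O(s(n)\log(t(n)))$, completing the proof.
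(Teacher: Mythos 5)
Your proposal is correct and matches the paper's approach: both use a Savitch-style binary divide-and-conquer, existentially guessing a single midpoint configuration of $O(s(n))$ bits and universally branching to verify the two halves, yielding $O(\log(t(n)))$ alternations, $O(s(n)\log(t(n)))$ witness length, and $O(s(n)\log(t(n)))$ time. Your write-up adds some welcome explicitness (the unique accepting configuration and the tape-management remark) but is not a different proof.
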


\begin{proof}
We proceed with a similar alternating simulation as in Proposition \ref{prop:atime1}, but we only guess one tape configuration
for each block of existential quantifiers.  The original computation is a sequence of $t(n)$ many tape configurations.
For this alternating simulation, each block of existential quantifiers will split a sequence of configurations in half.
A universal quantifier is then used to check both cases.  That is, the earlier configuration leads to the middle configuration
and the middle configuration leads to the later configuration.
Also, each block of existential quantifiers will
guess only $O(s(n))$ bits because the space bound is $O(s(n))$.  In total, this leads to
$O(\log(t(n)))$ alternations and witness length $O(s(n)\log(t(n)))$.  Furthermore, the time bound is $O(s(n)\log(t(n)))$ because
we only need to write down a sequence of middle configuration guesses and then finally check that one pair of configurations satisfies that the
earlier configuration leads to the later configuration in one step.
\end{proof}

By applying a repeated speed-up argument similar to Theorem \ref{hardnesshypo2} for a number of times that depends on the input size, we obtain the following corollary.

\begin{corollary}\label{hardnesshypo3}
The Hardness Hypothesis implies that $\PSPACE = \EXPTIME$.
\end{corollary}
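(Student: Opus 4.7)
The plan is to iterate the Theorem \ref{hardnesshypo2}-style speed-up $\Theta(\log n)$ times rather than a constant number of times, using Proposition \ref{prop:atime2} (which supports non-constant alternations at polylogarithmic-in-reduction-size cost) in place of Proposition \ref{prop:atime1}. Since $\PSPACE \subseteq \EXPTIME$ is immediate, the task reduces to showing $\EXPTIME \subseteq \PSPACE$. Fix $L \in \DTIME(2^{n^c})$ and maintain the invariant that after $d$ rounds $L$ is decided by an alternating Turing machine with deterministic verification time $t_d$, at most $a_d$ alternations, and combined binary witness length at most $w_d$, starting with $t_0 = 2^{n^c}$, $a_0 = 0$, and $w_0 = 0$.

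Each round mimics one application of the Theorem \ref{hardnesshypo2} speed-up. First, move all quantifiers to the front to obtain a deterministic verifier $V_d$ running in $\DTIME(t_d)$ on input $(x, \bar w)$ of total length $n + w_d$. Next, pad to length $N_d = \max(n + w_d,\, t_d^{1/(\beta + \varepsilon)})$ and invoke the Hardness Hypothesis, yielding an $O(N_d^{\beta})$-time reduction from $V_d$ to a $k$-$\dfaint$ instance of size $O(N_d^{\beta})$. Finally, simulate the resulting $\NL$ check by Proposition \ref{prop:atime2}: this adds $O(\log N_d)$ alternations and $O(\log^2 N_d)$ witness length, and produces a new verifier running in time $O(N_d^{\beta})$ (dominated by the reduction). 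Writing $r = \beta/(\beta + \varepsilon) < 1$ and noting that $t_d^{1/(\beta+\varepsilon)}$ dominates $n + w_d$ while $t_d$ remains superpolynomial, one obtains $t_{d+1} = O(t_d^{r})$ and hence $\log t_d = O(n^c r^d)$.

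Choosing $D = \Theta(\log n)$ so that $n^c r^D = O(\log n)$ drives $t_D$ down to $n^{O(1)}$. Summing the per-round increments via geometric series gives $a_D = O\!\bigl(\sum_{d=0}^{D-1} \log N_d\bigr) = O(n^c)$ and $w_D = O\!\bigl(\sum_{d=0}^{D-1} \log^2 N_d\bigr) = O(n^{2c})$, both polynomial in $n$. The resulting alternating machine for $L$ therefore has polynomial per-branch verification time, polynomially many alternations, and polynomial combined witness length, so $L \in \ATIME(n^{O(1)}) \subseteq \DSPACE(n^{O(1)}) = \PSPACE$ by the standard inclusion from \cite{alt1}. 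Combined with $\PSPACE \subseteq \EXPTIME$, this yields $\PSPACE = \EXPTIME$. The key technical obstacle is controlling the accumulated alternation and witness budgets across a non-constant number of rounds; this is precisely why Proposition \ref{prop:atime2} is used in place of Proposition \ref{prop:atime1}, as its $O(\log N_d)$ alternations and $O(\log^2 N_d)$ witness length shrink geometrically in $d$ and therefore sum to polynomial totals even after $\Theta(\log n)$ iterations.
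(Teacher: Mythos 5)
Your plan — iterate the Theorem~\ref{hardnesshypo2}-style speed-up $\Theta(\log n)$ times using Proposition~\ref{prop:atime2} instead of Proposition~\ref{prop:atime1}, and observe that the per-round alternation/witness increments shrink geometrically so that the totals remain polynomial — is the same high-level route the paper takes, and your resource accounting is essentially correct (indeed slightly sharper than the paper's coarse $O(dn)$, $O(dn^2)$ bounds).

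However, there is a genuine gap: you invoke the Hardness Hypothesis once per round, applied to a \emph{different} language $V_d$ in each of $\Theta(\log n)$ rounds. The hypothesis is an existential statement — for each language in $\DTIME(n^{\beta+\varepsilon})$ there exists an $O(n^{\beta})$-time reduction — but it says nothing about these reductions forming a uniform family, nor that the hidden constants in $O(N_d^{\beta})$ are independent of the language being reduced. Since the sequence of languages $V_0, V_1, \ldots$ depends on $n$ (through $D = \Theta(\log n)$), you have not exhibited a single alternating Turing machine that decides $L$ within the claimed bounds; you have only shown that, for each $d$, \emph{some} machine exists. This is precisely the issue the paper flags with ``It remains to explain how we carry out the construction where $d$ depends on $n$,'' and the paper's fix is to apply the hypothesis exactly \emph{once} to a fixed universal time-bounded acceptance language $U$, obtain a single alternating machine $\mathcal{M}$ deciding $U$, and then build the recursive sequence $\{M_d\}$ by repeatedly padding and plugging inner verifiers into $\mathcal{M}$, so that the reduction, its running-time constant, and the alternating simulation are the same in every round. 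Without a uniformization step of this kind, the iteration over a non-constant number of rounds does not go through.

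Secondarily, you identify the accumulated alternation/witness budget as ``the key technical obstacle,'' but the geometric decay you correctly observe actually makes that part routine; the real obstacle is the uniformity issue above, which your write-up does not mention.
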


\begin{proof}

Suppose that the Hardness Hypothesis holds for $k \in \mathbb{N}$, $\beta \geq 1$, and $\varepsilon > 0$.  Let $r = \frac{\beta}{\beta + \varepsilon}$.
Following the approach of Theorem \ref{hardnesshypo2} with the efficient alternating simulation from Proposition \ref{prop:atime2}, we obtain
$$\DTIME(2^n) \subseteq \DTIRE(2^{r \cdot n}, \log(n)) \subseteq \ATIME^{O(n)}_{O(n^2)}(2^{r \cdot n}).$$
Note that the second inclusion ($\subseteq$) follows by applying Proposition \ref{prop:atime2} with $t(n) = 2^{r \cdot n}$ and $s(n) = r \cdot n$.
By repeated application of these inclusions, we obtain $$\DTIME(2^n) \subseteq \ATIME^{O(d \cdot n)}_{O(d \cdot n^2)}(2^{r^d \cdot n})$$ for all $d \in \mathbb{N}$.
Furthermore, we proceed by carrying out this construction where $d$ depends on the input length $n$.
In particular, taking $d = \log_{1 / r}(n)$ we obtain $$\DTIME(2^n) \subseteq \ATIME^{O(n \log(n))}_{O(n^2 \log(n))}(n^{\beta} + n^{2} \log(n)) \subseteq \DSPACE(n^{\beta} + n^2 \log(n)).$$
It follows that $\EXPTIME \subseteq \PSPACE$.

It remains to explain how we carry out the construction where $d$ depends on $n$.
First, we observe that by the hardness hypothesis along with Proposition \ref{prop:atime2} we have
$$\DTIME(n^{\beta + \varepsilon} \log(n)) \subseteq \DTIRE(n^{\beta} \log(n), \log(n)) \subseteq \ATIME^{O(\log(n))}_{O(\log^2(n))}(n^{\beta} \log(n)).$$
Next, consider the language $U = \{ \, \texttt{enc}(M)\#s \,\, \vert \,\, \text{M accepts s in at most} \, \vert s \vert^{\beta + \varepsilon} \, \text{steps} \, \}$.  In other words,
$U$ consists of encoded Turing machine and input string pairs where the machine accepts the string within $n^{\beta}$ steps.  By \cite{time2}, we know
that $U \in \DTIME(n^{\beta + \varepsilon} \log(n))$.  Therefore, $$U \in \ATIME^{O(\log(n))}_{O(\log^2(n))}(n^{\beta} \log(n)).$$
Consider an alternating machine $\mathcal{M}$ that decides $U$ within these time, alternation, and witness length bounds.
Finally, let a language $L \in \DTIME(2^n)$ be given.  We recursively define a sequence $\{ M_d \}_{d \in \mathbb{N}}$ of alternating Turing
machines that decide $L$.
We define $M_1$ by taking the Turing machine that decides $L$ deterministically in $O(2^n)$ time, applying a padding phase, and plugging an encoded machine
with an input string into $\mathcal{M}$.
Given $M_d$, we define $M_{d+1}$ as follows.  We pull out the quantifiers for $M_d$ to get the inner verifier.
We next apply a padding phase and use the inner verifier to plug an encoded machine with an input string into $\mathcal{M}$.
We combine the pulled out quantifiers with the quantifiers from $\mathcal{M}$ to obtain $M_{d+1}$.

Now, given $d$ as input, we can compute $M_d$ efficiently and the size of $M_d$ only grows polynomially with $d$.
Therefore, we can carry out the construction as described obtaining $\EXPTIME \subseteq \PSPACE$.
It is worth noting that the logarithmic factor added from using $U$ and $\mathcal{M}$ does not have a significant effect
on the resource bounds in the construction because it gets reduced with each iteration of the speed-up argument.
\end{proof}

\section{Conclusion}

In Theorem \ref{conditionallower1}, we introduced an improved conditional time complexity lower bound for the $k$-$\dfaint$ problems.
Next, in Corollary \ref{timelower3}, we demonstrated that $\dfaint$ is not solvable in $O(n^{2 - \varepsilon})$ time unconditionally for all $\varepsilon > 0$.
Then, in Corollary \ref{timelower4}, we improved this lower bound to include the case when a limited amount of non-determinism is permitted.
Finally, in Corollary \ref{hardnesshypo3}, we showed that if any of the $k$-$\dfaint$ problems are hard for polynomial time classes
under sufficiently fast reductions, then $\PSPACE = \EXPTIME$.
In particular, we have that the Hardness Hypothesis (from Section \ref{sec:hardhypo}) implies that $\PSPACE = \EXPTIME$.

From these results, we conclude that the $\dfaint$ problem is not only computationally hard for space bounded computations, but also for strongly subquadratic time.
This offers us hope to show a definitive time complexity lower bound for one of the $k$-$\dfaint$ problems for which we currently only have conditional time lower bounds.
It also provides a general strategy for showing unconditional time complexity lower bounds for other $\PSPACE$-complete problems such as was shown for non-emptiness for two-way DFA's in Corollary \ref{timelower5}.
Lastly, the results of Section \ref{sec:hardhypo} suggest that if the $k$-$\dfaint$ problems are really as hard as they appear to be, then it would be possible to improve the recent breakthrough on
the space efficient simulation of time bounded computations from \cite{williams2025}.  In particular, the hardness of $k$-$\dfaint$ for
fixed polynomial time complexity classes (as considered in Proposition \ref{hardnesshypo4}, Theorem \ref{hardnesshypo2}, and Corollary \ref{hardnesshypo3}) would imply that all polynomial time problems are solvable in fixed polynomial
deterministic space.

\bibliography{main}

@inproceedings{kozen1977,
  author = {Kozen, Dexter},
  title = {Lower bounds for natural proof systems},
  year = {1977},
  publisher = {IEEE Computer Society},
  address = {USA},
  url = {https://doi.org/10.1109/SFCS.1977.16},
  doi = {10.1109/SFCS.1977.16},
  abstract = {Two decidable logical theories are presented, one complete for deterministic polynomial time, one complete for polynomial space. Both have natural proof systems. A lower space bound of n/log(n) is shown for the proof system for the PTIME complete theory and a lower length bound of 2cn/log(n) is shown for the proof system for the PSPACE complete theory.},
  booktitle = {Proceedings of the 18th Annual Symposium on Foundations of Computer Science},
  pages = {254–266},
  numpages = {13},
  series = {SFCS '77}
}

@article{galil1976,
  author = {Galil, Zvi},
  title = {Hierarchies of complete problems},
  year = {1976},
  issue_date = {Mar 1976},
  publisher = {Springer-Verlag},
  address = {Berlin, Heidelberg},
  volume = {6},
  number = {1},
  issn = {0001-5903},
  url = {https://doi.org/10.1007/BF00263744},
  doi = {10.1007/BF00263744},
  abstract = {An attempt is made to present a framework for the diverse complete problems that have been found. A new concept—a Hierarchy of Complete Problems is defined. Several hierarchies in various domains such as graph theory, automata theory, theorem proving and games are established.},
  journal = {Acta Inf.},
  month = mar,
  pages = {77–88},
  numpages = {12},
  keywords = {Graph Theory, Communication Network, Data Structure, Operating System, Information System}
}

@phdthesis{hunt1973,
  author       = {Harry B. Hunt III},
  title        = {On the Time and Tape Complexity of Languages},
  school       = {Cornell University, {USA}},
  year         = {1973},
  timestamp    = {Fri, 01 Apr 2022 12:21:31 +0200},
  biburl       = {https://dblp.org/rec/phd/us/Hunt73.bib},
  bibsource    = {dblp computer science bibliography, https://dblp.org}
}

@article{kasai1985,
  title={Gradually intractable problems and nondeterministic log-space lower bounds},
  author={Kasai, Takumi and Iwata, Shigeki},
  journal={Mathematical systems theory},
  volume={18},
  number={1},
  pages={153--170},
  year={1985},
  publisher={Springer}
}

@article{kasai1984,
  author = {Adachi, Akeo and Iwata, Shigeki and Kasai, Takumi},
  title = {Some combinatorial game problems require {$\Omega(n^k)$} time},
  year = {1984},
  issue_date = {April 1984},
  publisher = {Association for Computing Machinery},
  address = {New York, NY, USA},
  volume = {31},
  number = {2},
  issn = {0004-5411},
  url = {https://doi.org/10.1145/62.322433},
  doi = {10.1145/62.322433},
  journal = {J. ACM},
  month = mar,
  pages = {361–376},
  numpages = {16}
}

@InProceedings{wehar2014,
  author="Wehar, Michael",
  editor="Esparza, Javier
  and Fraigniaud, Pierre
  and Husfeldt, Thore
  and Koutsoupias, Elias",
  title="Hardness Results for Intersection Non-Emptiness",
  booktitle="Automata, Languages, and Programming",
  year="2014",
  publisher="Springer Berlin Heidelberg",
  address="Berlin, Heidelberg",
  pages="354--362",
  abstract="We carefully reexamine a construction of Karakostas, Lipton, and Viglas (2003) to show that the intersection non-emptiness problem for DFA's (deterministic finite automata) characterizes the complexity class NL. In particular, if restricted to a binary work tape alphabet, then there exist constants c1 and c2 such that for every k intersection non-emptiness for k DFA's is solvable in c1k log(n) space, but is not solvable in c2k log(n) space. We optimize the construction to show that for an arbitrary number of DFA's intersection non-emptiness is not solvable in {\$}o({\backslash}frac{\{}n{\}}{\{}{\backslash}log(n){\backslash}log({\backslash}log(n)){\}}){\$}space. Furthermore, if there exists a function f(k){\thinspace}={\thinspace}o(k) such that for every k intersection non-emptiness for k DFA's is solvable in nf(k) time, then P ≠ NL. If there does not exist a constant c such that for every k intersection non-emptiness for k DFA's is solvable in nctime, then P does not contain any space complexity class larger than NL.",
  isbn="978-3-662-43951-7"
}

@InProceedings{oliveira2020,
  author="de Oliveira Oliveira, Mateus
  and Wehar, Michael",
  editor="Jonoska, Nata{\v{s}}a
  and Savchuk, Dmytro",
  title="On the Fine Grained Complexity of Finite Automata Non-emptiness of Intersection",
  booktitle="Developments in Language Theory",
  year="2020",
  publisher="Springer International Publishing",
  address="Cham",
  pages="69--82",
  abstract="We study the fine grained complexity of the DFA non-emptiness of intersection problem parameterized by the number k of input automata (k -DFA-NEI). More specifically, we are given a list {\$}{\$}{\backslash}langle {\backslash}mathcal {\{}A{\}}{\_}1,...,{\backslash}mathcal {\{}A{\}}{\_}k{\backslash}rangle {\$}{\$} of DFA's over a common alphabet {\$}{\$}{\backslash}varSigma {\$}{\$}, and the goal is to determine whether {\$}{\$}{\backslash}bigcap {\_}{\{}i=1{\}}^{\{}k{\}}{\{}{\backslash}mathcal {\{}L{\}}{\}}({\backslash}mathcal {\{}A{\}}{\_}i){\backslash}ne {\backslash}emptyset {\$}{\$}. This problem can be solved in time {\$}{\$}O(n^k){\$}{\$} by applying the classic Rabin-Scott product construction. In this work, we show that the existence of algorithms solving k -DFA-NEI in time slightly faster than {\$}{\$}O(n^k){\$}{\$} would imply the existence of deterministic sub-exponential time algorithms for the simulation of nondeterministic linear space bounded computations. This consequence strengthens the existing conditional lower bounds for k-DFA-NEI and implies new non-uniform circuit lower bounds.",
  isbn="978-3-030-48516-0"
}

@reference{oliveira2020manuscript,
  author = {de Oliveira Oliveira, Mateus
  and Wehar, Michael},
  year = {2020},
  title = {Emptiness of Intersection of Automata of Bounded Cutwidth},
  note = {Manuscript}
}

@InProceedings{wehar2015,
  author="Swernofsky, Joseph
  and Wehar, Michael",
  editor="Halld{\'o}rsson, Magn{\'u}s M.
  and Iwama, Kazuo
  and Kobayashi, Naoki
  and Speckmann, Bettina",
  title="On the Complexity of Intersecting Regular, Context-Free, and Tree Languages",
  booktitle="Automata, Languages, and Programming",
  year="2015",
  publisher="Springer Berlin Heidelberg",
  address="Berlin, Heidelberg",
  pages="414--426",
  abstract="We apply a construction of Cook (1971) to show that the intersection non-emptiness problem for one PDA (pushdown automaton) and a finite list of DFA's (deterministic finite automata) characterizes the complexity class P. In particular, we show that there exist constants {\$}{\$}c{\_}1{\$}{\$}and {\$}{\$}c{\_}2{\$}{\$}such that for every k, intersection non-emptiness for one PDA and k DFA's is solvable in {\$}{\$}O(n^{\{}c{\_}1 k{\}}){\$}{\$}time, but is not solvable in {\$}{\$}O(n^{\{}c{\_}2 k{\}}){\$}{\$}time. Then, for every k, we reduce intersection non-emptiness for one PDA and {\$}{\$}2^k{\$}{\$}DFA's to non-emptiness for multi-stack pushdown automata with k-phase switches to obtain a tight time complexity lower bound. Further, we revisit a construction of Veanes (1997) to show that the intersection non-emptiness problem for tree automata also characterizes the complexity class P. We show that there exist constants {\$}{\$}c{\_}1{\$}{\$}and {\$}{\$}c{\_}2{\$}{\$}such that for every k, intersection non-emptiness for k tree automata is solvable in {\$}{\$}O(n^{\{}c{\_}1 k{\}}){\$}{\$}time, but is not solvable in {\$}{\$}O(n^{\{}c{\_}2 k{\}}){\$}{\$}time.",
  isbn="978-3-662-47666-6"
}

@phdthesis{wehar2016,
  title={On the complexity of intersection non-emptiness problems},
  author={Wehar, Michael},
  year={2016},
  school={University at Buffalo}
}

@inproceedings{williams2025,
  author = {Williams, R. Ryan},
  title = {Simulating Time with Square-Root Space},
  year = {2025},
  isbn = {9798400715105},
  publisher = {Association for Computing Machinery},
  address = {New York, NY, USA},
  url = {https://doi.org/10.1145/3717823.3718225},
  doi = {10.1145/3717823.3718225},
  abstract = {We show that for all functions t(n) ≥ n, every multitape Turing machine running in time t can be simulated in space only O(√t logt). This is a substantial improvement over Hopcroft, Paul, and Valiant’s simulation of time t in O(t/logt) space from 50 years ago [FOCS 1975, JACM 1977]. Among other results, our simulation implies that bounded fan-in circuits of size s can be evaluated on any input in only √s · poly(logs) space, and that there are explicit problems solvable in O(n) space which require at least n2−ε time on every multitape Turing machine for all ε > 0, thereby making a little progress on the P versus PSPACE problem. Our simulation reduces the problem of simulating time-bounded multitape Turing machines to a series of implicitly-defined Tree Evaluation instances with nice parameters, leveraging the remarkable space-efficient algorithm for Tree Evaluation recently found by Cook and Mertz [STOC 2024].},
  booktitle = {Proceedings of the 57th Annual ACM Symposium on Theory of Computing},
  pages = {13–23},
  numpages = {11},
  keywords = {P versus PSPACE, multitape Turing machine, space complexity, time lower bound, tree evaluation problem},
  location = {Prague, Czechia},
  series = {STOC '25}
}

@inproceedings{treeeval,
  author = {Cook, James and Mertz, Ian},
  title = {Tree Evaluation Is in Space ${O}(\log n \cdot \log \log n)$},
  year = {2024},
  isbn = {9798400703836},
  publisher = {Association for Computing Machinery},
  address = {New York, NY, USA},
  url = {https://doi.org/10.1145/3618260.3649664},
  doi = {10.1145/3618260.3649664},
  abstract = {The Tree Evaluation Problem (TreeEval) (Cook et al. 2009)  is a central candidate for separating polynomial  time (P) from logarithmic space (L) via composition.  While space lower bounds of Ω(log2 n)  are known for multiple restricted models,  it was recently shown by Cook and Mertz (2020)  that TreeEval can be solved in space  O(log2 n/loglogn). Thus its status as a  candidate hard problem for L remains a mystery. Our main result is to improve the space complexity of TreeEval  to O(logn · loglogn), thus greatly strengthening  the case that Tree Evaluation is in fact in L.   We show two consequences of these results. First,  we show that the KRW conjecture  (Karchmer, Raz, and Wigderson 1995)  implies L ⊈NC1; this itself would have many implications, such  as branching programs not being efficiently simulable  by formulas. Our second consequence is to increase  our understanding of amortized branching programs,  also known as catalytic branching programs; we show  that every function f on n bits can be computed by such a program  of length Poly(n) and width 2O(n).},
  booktitle = {Proceedings of the 56th Annual ACM Symposium on Theory of Computing},
  pages = {1268–1278},
  numpages = {11},
  keywords = {Branching Programs, Catalytic Computation, Composition Theorems, KRW Conjecture, Logspace, Tree Evaluation Problem},
  location = {Vancouver, BC, Canada},
  series = {STOC 2024}
}

@ARTICLE{rabin1959,
  author={Rabin, M. O. and Scott, D.},
  journal={IBM Journal of Research and Development},
  title={Finite Automata and Their Decision Problems},
  year={1959},
  volume={3},
  number={2},
  pages={114-125},
  keywords={},
  doi={10.1147/rd.32.0114}
}

@article{borodin1972,
  author = {Borodin, A.},
  title = {Computational Complexity and the Existence of Complexity Gaps},
  year = {1972},
  issue_date = {Jan. 1972},
  publisher = {Association for Computing Machinery},
  address = {New York, NY, USA},
  volume = {19},
  number = {1},
  issn = {0004-5411},
  url = {https://doi.org/10.1145/321679.321691},
  doi = {10.1145/321679.321691},
  journal = {J. ACM},
  month = jan,
  pages = {158–174},
  numpages = {17}
}

@article{trakhtenbrot1964,
  title={Turing computers with logarithmic delay},
  author={Trakhtenbrot, Boris Avraamovich},
  journal={Algebra i logika},
  volume={3},
  number={4},
  pages={33--48},
  year={1964},
  publisher={Siberian Fund for Algebra and Logic}
}

@article{karakostas2003,
  title={On the complexity of intersecting finite state automata and NL versus NP},
  author={Karakostas, George and Lipton, Richard J and Viglas, Anastasios},
  journal={Theoretical Computer Science},
  volume={302},
  number={1-3},
  pages={257--274},
  year={2003},
  publisher={Elsevier}
}

@article{fernau2017,
  title={Problems on finite automata and the exponential time hypothesis},
  author={Fernau, Henning and Krebs, Andreas},
  journal={Algorithms},
  volume={10},
  number={1},
  pages={24},
  year={2017},
  publisher={MDPI}
}

@MISC{lowerbounds1,
    TITLE = {Problem in deterministic time $n^p$ and not lower},
    AUTHOR = {ULechine},
    HOWPUBLISHED = {Theoretical Computer Science Stack Exchange},
    URL = {https://cstheory.stackexchange.com/q/46284}
}

@MISC {lowerbounds2,
    TITLE = {Are there more polynomial time problems with complexity lower bounds?},
    AUTHOR = {Michael Wehar},
    HOWPUBLISHED = {Theoretical Computer Science Stack Exchange},
    URL = {https://cstheory.stackexchange.com/q/33063}
}

@article{hopcroft1,
  author = {Hopcroft, John and Paul, Wolfgang and Valiant, Leslie},
  title = {On Time Versus Space},
  year = {1977},
  issue_date = {April 1977},
  publisher = {Association for Computing Machinery},
  address = {New York, NY, USA},
  volume = {24},
  number = {2},
  issn = {0004-5411},
  url = {https://doi.org/10.1145/322003.322015},
  doi = {10.1145/322003.322015},
  abstract = {It is shown that every deterministic multitape Turing machine of time complexity t(n) can be simulated by a deterministic Turing machine of tape complexity t(n)/logt(n). Consequently, for tape constructable t(n), the class of languages recognizable by multitape Turing machines of time complexity t(n) is strictly contained in the class of languages recognized by Turing machines of tape complexity t(n). In particular the context-sensitive languages cannot be recognized in linear time by deterministic multitape Turing machines.},
  journal = {J. ACM},
  month = apr,
  pages = {332–337},
  numpages = {6}
}

@article{hopcroft2,
  author = {Hopcroft, John E. and Ullman, Jeffrey D.},
  title = {Relations Between Time and Tape Complexities},
  year = {1968},
  issue_date = {July 1968},
  publisher = {Association for Computing Machinery},
  address = {New York, NY, USA},
  volume = {15},
  number = {3},
  issn = {0004-5411},
  url = {https://doi.org/10.1145/321466.321474},
  doi = {10.1145/321466.321474},
  abstract = {It is shown that if a language L is recognized by a (nondeterministic) single-tape Turing machine of time complexity T(n), then L is recognized by a (nondeterministic) offline Turing machine of tape complexity T1/2(n). If T(n) ≥ n2;, L is recognized by a (nondeterministic) single-tape Turing machine of tape complexity T1/2(n). If a language L is recognized by a (nondeterministic) offline Turing machine of time complexity (T(n), then L is recognized by a (nondeterministic) offline Turing machine of tape complexity (T(n) log n)1/2 and by a (nondeterministic) single-tape Turing machine of that tape complexity if T (n) ≥ n2/log n.},
  journal = {J. ACM},
  month = jul,
  pages = {414–427},
  numpages = {14}
}

@article{conl1,
  title={The method of forced enumeration for nondeterministic automata},
  author={Szelepcs{\'e}nyi, R{\'o}bert},
  journal={Acta informatica},
  volume={26},
  number={3},
  pages={279--284},
  year={1988},
  publisher={Springer}
}

@article{conl2,
  author = {Immerman, Neil},
  title = {Nondeterministic Space is Closed under Complementation},
  journal = {SIAM Journal on Computing},
  volume = {17},
  number = {5},
  pages = {935-938},
  year = {1988},
  doi = {10.1137/0217058},
  URL = {https://doi.org/10.1137/0217058},
  abstract = { In this paper we show that nondeterministic space \$s(n)\$ is closed under complementation for \$s(n)\$ greater than or equal to \$\log n\$. It immediately follows that the context-sensitive languages are closed under complementation, thus settling a question raised by Kuroda [Inform. and Control, 7 (1964), pp. 207–233]. }
}

@article{alt1,
  author = {Chandra, Ashok K. and Kozen, Dexter C. and Stockmeyer, Larry J.},
  title = {Alternation},
  year = {1981},
  issue_date = {Jan. 1981},
  publisher = {Association for Computing Machinery},
  address = {New York, NY, USA},
  volume = {28},
  number = {1},
  issn = {0004-5411},
  url = {https://doi.org/10.1145/322234.322243},
  doi = {10.1145/322234.322243},
  journal = {J. ACM},
  month = jan,
  pages = {114–133},
  numpages = {20}
}

@article{atime1,
  author = {V. A. Nepomnyashchii},
  title = {Rudimentary predicates and Turing computations},
  journal = {Dokl. Akad. Nauk SSSR},
  volume = {195},
  number = {2},
  pages = {282--284},
  year = {1970},
  url = {http://mi.mathnet.ru/dan35780},
  mrnumber = {0281611},
  zblnumber = {0223.02031}
}

@INPROCEEDINGS{atime2,
  author={Fortnow, L.},
  booktitle={Proceedings of Computational Complexity. Twelfth Annual IEEE Conference},
  title={Nondeterministic polynomial time versus nondeterministic logarithmic space: time-space tradeoffs for satisfiability},
  year={1997},
  volume={},
  number={},
  pages={52-60},
  keywords={Polynomials;Turing machines;Computer science;Computational modeling;Complexity theory;Logic;Geometry;Solid modeling;Uniform resource locators},
  doi={10.1109/CCC.1997.612300}
}

@article{atime3,
  title={Alternating Hierarchies for Time-Space Tradeoffs},
  author={Pollett, Chris and Miles, Eric},
  journal={arXiv preprint arXiv:0801.1307},
  year={2008}
}

@article{atime4,
  title={Alternation-trading proofs, linear programming, and lower bounds},
  author={Williams, Ryan},
  journal={ACM Transactions on Computation Theory (TOCT)},
  volume={5},
  number={2},
  pages={1--49},
  year={2013},
  publisher={ACM New York, NY, USA}
}

@article{nspace,
  title = {Space hierarchy theorem revised},
  journal = {Theoretical Computer Science},
  volume = {295},
  number = {1},
  pages = {171-187},
  year = {2003},
  note = {Mathematical Foundations of Computer Science},
  issn = {0304-3975},
  doi = {https://doi.org/10.1016/S0304-3975(02)00402-4},
  author = {Viliam Geffert},
  keywords = {Computational complexity, Space complexity},
  abstract = {We show that, for an arbitrary function h(n) and each recursive function ℓ(n), that are separated by a nondeterministically fully space constructible g(n), such that h(n)∈Ω(g(n)) but ℓ(n)∉Ω(g(n)), there exists a unary language L in NSPACE(h(n)) that is not contained in NSPACE(ℓ(n)). The same holds for the deterministic case. The main contribution to the well-known Space Hierarchy Theorem is that (i) the language L separating the two space classes is unary (tally), (ii) the hierarchy is independent of whether h(n) or ℓ(n) are in Ω(logn) or in o(logn), (iii) the functions h(n) or ℓ(n) themselves need not be space constructible nor monotone increasing, (iv) the hierarchy is established both for strong and weak space complexity classes. This allows us to present unary languages in such complexity classes as, for example, NSPACE(loglogn·log∗n)⧹NSPACE(loglogn), using a plain diagonalization.}
}

@article{time1,
  title={On the computational complexity of algorithms},
  author={Hartmanis, Juris and Stearns, Richard E},
  journal={Transactions of the American Mathematical Society},
  volume={117},
  pages={285--306},
  year={1965},
  publisher={JSTOR}
}

@article{time2,
  author = {Hennie, F. C. and Stearns, R. E.},
  title = {Two-Tape Simulation of Multitape Turing Machines},
  year = {1966},
  issue_date = {Oct. 1966},
  publisher = {Association for Computing Machinery},
  address = {New York, NY, USA},
  volume = {13},
  number = {4},
  issn = {0004-5411},
  url = {https://doi.org/10.1145/321356.321362},
  doi = {10.1145/321356.321362},
  abstract = {It has long been known that increasing the number of tapes used by a Turing machine does not provide the ability to compute any new functions. On the other hand, the use of extra tapes does make it possible to speed up the computation of certain functions. It is known that a square factor is sometimes required for a one-tape machine to behave as a two-tape machine and that a square factor is always sufficient.The purpose of this paper is to show that, if a given function requires computation time T for a k-tape realization, then it requires at most computation time T log T for a two-tape realization. The proof of this fact is constructive; given any k-tape machine, it is shown how to design an equivalent two-tape machine that operates within the stated time bounds. In addition to being interesting in its own right, the trade-off relation between number of tapes and speed of computation can be used in a diagonalization argument to show that if T(n) and U(n) are two time functions such that inf T(n) log T(n) undefined U(n) = 0 then there exists a function that can be computed within the time bound U(n) but not within the time bound T(n).},
  journal = {J. ACM},
  month = oct,
  pages = {533–546},
  numpages = {14}
}

@article{guesscheck,
  author = {Cai, Liming and Chen, Jianer},
  title = {On the Amount of Nondeterminism and the Power of Verifying},
  journal = {SIAM Journal on Computing},
  volume = {26},
  number = {3},
  pages = {733-750},
  year = {1997},
  doi = {10.1137/S0097539793258295},
  URL = {https://doi.org/10.1137/S0097539793258295},
  eprint = {https://doi.org/10.1137/S0097539793258295}
}

@book{subadditive,
  title={Functional Analysis and Semi-groups},
  author={Hille, E.},
  lccn={48009892},
  year={1948},
  publisher={American Mathematical Society}
}

@inproceedings{constructible,
  title={Hierarchies of memory limited computations},
  author={Stearns, Richard Edwin and Hartmanis, Juris and Lewis, Philip M},
  booktitle={6th annual symposium on switching circuit theory and logical design (SWCT 1965)},
  pages={179--190},
  year={1965},
  organization={IEEE}
}

\end{document}